\newcommand{\citep}[1]{\cite{#1}}
\renewcommand{\hat}{\widehat}
\def\beq{\begin{equation}}
\def\eeq{\end{equation}}
\def\beqa{\begin{eqnarray}}
\def\eeqa{\end{eqnarray}}
\def\beqan{\begin{eqnarray*}}
\def\eeqan{\end{eqnarray*}}
\def\R{{\mathbb{R}}}
\def\argmin{\mathop{\mathrm{arg\,min}}}
\def\x{\times}
\newtheorem{theorem}{Theorem}
\newtheorem{assumption}{Assumption}
\def\zhat{\hat{z}}
\def\arr{\rightarrow}
\def\Exp{\mathbb{E}}
\def\Cov{\mathrm{Cov}}
\def\rank{\mathrm{rank}}
\def\alphabar{\overline{\alpha}}
\def\gammabar{\overline{\gamma}}
\def\tm1{t\! - \! 1}
\def\tp1{t\! + \! 1}
\def\km1{k\! - \! 1}
\def\kp1{k\! + \! 1}
\def\lp1{\ell\! + \! 1}
\def\lm1{\ell\! - \! 1}
\def\Lm1{L\! - \! 1}
\def\ip1{i\! + \! 1}
\def\im1{i\! - \! 1}
\newcommand{\zero}{\mathbf{0}}
\newcommand{\bbf}{\mathbf{b}}
\newcommand{\fbf}{\mathbf{f}}
\newcommand{\gbf}{\mathbf{g}}
\newcommand{\hbf}{\mathbf{h}}
\newcommand{\pbf}{\mathbf{p}}
\newcommand{\pbfhat}{\hat{\mathbf{p}}}
\newcommand{\qbf}{\mathbf{q}}
\newcommand{\qbfhat}{\hat{\mathbf{q}}}
\newcommand{\rbf}{\mathbf{r}}
\newcommand{\sbf}{\mathbf{s}}
\newcommand{\wbf}{\mathbf{w}}
\newcommand{\xbf}{\mathbf{x}}
\newcommand{\ybf}{\mathbf{y}}
\newcommand{\zbf}{\mathbf{z}}
\newcommand{\zbfhat}{\hat{\mathbf{z}}}
\newcommand{\Abf}{\mathbf{A}}
\newcommand{\Gbf}{\mathbf{G}}
\newcommand{\Kbf}{\mathbf{K}}
\newcommand{\Ubf}{\mathbf{U}}
\newcommand{\Vbf}{\mathbf{V}}
\newcommand{\Wbf}{\mathbf{W}}
\def\Lambdabar{\overline{\Lambda}}
\def\Sigmabf{{\boldsymbol \Sigma}}
\def\xibf{{\boldsymbol \xi}}
\newcommand{\phibf}{{\bm{\phi}}}
\newcommand{\tran}{^{\text{\sf T}}}
\def\PLeq{\stackrel{PL(2)}{=}}
\def\Norm{{\mathcal N}}
\def\Diag{\mathrm{Diag}}
\def\alphabar{\overline{\alpha}}
\newcommand{\bkt}[1]{{\langle #1 \rangle}}
\newcommand{\mc}{\mathcal}
\title{Asymptotics of MAP Inference in Deep Networks}
    \author{
        \IEEEauthorblockN
        {Parthe Pandit\IEEEauthorrefmark{1},
        Mojtaba Sahraee\IEEEauthorrefmark{1},
        Sundeep Rangan\IEEEauthorrefmark{2},
        Alyson~K.~Fletcher\IEEEauthorrefmark{1}}
       \IEEEauthorblockA{\IEEEauthorrefmark{1}
        UCLA, Dept. Statistics and ECE, Los Angeles, CA,
        \{parthepandit,msahraee,akfletcher\}@ucla.edu}
       \IEEEauthorblockA{\IEEEauthorrefmark{2}
        NYU, Dept. ECE, Brooklyn, NY,
        srangan@nyu.edu}
    }
    \author{
        Parthe Pandit, Mojtaba Sahraee, Alyson~K.~Fletcher, Sundeep Rangan%
        \thanks{P. Pandit, M. Sahraee and A.~K.~Fletcher
        (email: \{parthepandit,msahraee,akfletcher\}@ucla.edu) are with
        the Department of Statistics and Electrical Engineering,
        the University of California, Los Angeles, CA, 90095\@.
        Their work was supported in part by the National Science Foundation under
        Grants 1254204 and 1738286, and the Office of Naval Research under Grant
        N00014-15-1-2677.
        S. Rangan (email: srangan@nyu.edu) is with
        the Department of Electrical and Computer Engineering,
        New York University, Brooklyn, NY, 11201\@.
        His work was supported in part by the National Science Foundation
        under Grants 1116589, 1302336, and 1547332,
        as well as
        the industrial affiliates of NYU WIRELESS.
    }
}
\begin{document}

\maketitle

\begin{abstract}
Deep generative priors are a powerful tool for reconstruction problems
with complex data such as images and text.  Inverse problems using such models
require solving an inference problem of estimating the input and hidden units of the
multi-layer network from its output.  Maximum a priori (MAP) estimation is a
widely-used inference method as it is straightforward to implement, and
has been successful in practice.  However, rigorous
analysis of MAP inference in multi-layer networks is difficult.
This work considers a recently-developed method, multi-layer
vector approximate message passing (ML-VAMP), to study MAP inference in deep networks.
It is shown that the mean squared error of the
ML-VAMP estimate can be exactly and rigorously characterized
in a certain high-dimensional random limit.  The proposed method thus
provides a tractable method for MAP inference with exact performance guarantees.
\end{abstract}

\section{Introduction}
We consider inference in an $L$ layer stochastic neural network of the form,
\begin{subequations}  \label{eq:nntrue}
\begin{align}
    \zbf^0_\ell &= \Wbf_{\ell}\zbf^0_{\lm1} + \bbf_{\ell} + \xibf_\ell,
    \quad \ell=1,3,\ldots,\Lm1
        \label{eq:nnlintrue} \\
    \zbf^0_{\ell} &=  \phibf_\ell(\zbf^0_{\lm1},\xibf_{\ell}), \quad \ell = 2,4,\ldots,L.
        \label{eq:nnnonlintrue}
\end{align}
\end{subequations}
where $\zbf^0_0$ is the initial input, $\zbf^0_\ell$, $\ell=1,\ldots,L-1$ are the intermediate
hidden unit outputs and $\ybf=\zbf^0_L$ is the output.  The number of layers $L$ is even.
The equations \eqref{eq:nnlintrue} correspond to linear (fully-connected) layers
with weights and biases $\Wbf_\ell$ and $\bbf_\ell$, while \eqref{eq:nnnonlintrue} correspond to
elementwise activation functions such as sigmoid or ReLU.  The signals $\xibf_\ell$ represent noise
terms.  A block diagram for the network is shown in the top panel of Fig.~\ref{fig:nn_ml_vamp}.
The \emph{inference problem} is to estimate the initial and hidden states $\zbf^0_\ell$,
$\ell=0,\ldots,\Lm1$ from the final output $\ybf$.   We assume that network parameters
(the weights, biases and activation functions) are all known (i.e.\ already trained).
Hence, this is \emph{not} the learning problem.  The superscript 0 in $\zbf^0_\ell$ indicates that these
are the ``true" values, to be distinguished from estimates that we will discuss later.

This inference problem arises commonly when deep networks are used as generative priors.
Deep neural networks have been extremely successful in providing
probabilistic generative models of complex data such as images,
audio and text.  The models
can be trained either via variational autoencoders \citep{rezende2014stochastic,kingma2013auto}
or generative adversarial networks \cite{radford2015unsupervised,salakhutdinov2015learning}.
In inverse problems, a deep network is used as
a generative prior for the data (such as an image)
and additional layers are added to model the measurements (such as blurring, occlusion or noise)
\cite{yeh2016semantic,bora2017compressed}.
Inference can then be used to reconstruct the original image from the measurements.

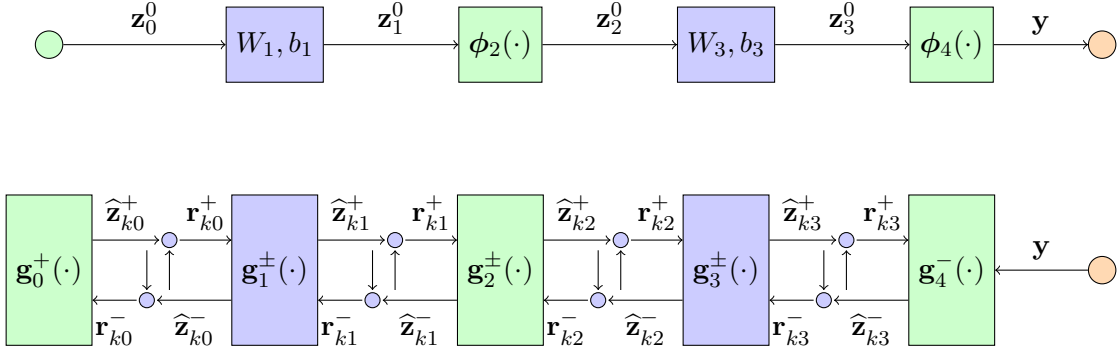
\begin{figure*}
\centering
\begin{tikzpicture}

    \pgfmathsetmacro{\sep}{3};
    \pgfmathsetmacro{\yoff}{0.4};
    \pgfmathsetmacro{\xoffa}{0.3};
    \pgfmathsetmacro{\xoffb}{0.6};

    \tikzstyle{var}=[draw,circle,fill=green!20,node distance=2.5cm];
    \tikzstyle{yvar}=[draw,circle,fill=orange!30,node distance=2.5cm];
    \tikzstyle{conn}=[draw,circle,fill=green!40,radius=0.02cm];
    \tikzstyle{linest}=[draw,fill=blue!20,minimum size=1cm,
        minimum height=2cm, node distance=\sep cm]
    \tikzstyle{nlest}=[draw,fill=green!20,minimum size=1cm,
        minimum height=2cm, node distance=\sep cm]
    \tikzstyle{linblock}=[draw,fill=blue!20, minimum size=1cm, node distance=\sep cm];
    \tikzstyle{nlblock}=[draw,fill=green!20, minimum size=1cm, node distance=\sep cm];

    \node [var] (z0) {};
    \node [linblock, right of=z0] (W1) {$W_1,b_1$};
    \node [nlblock, right of=W1] (phi2) {$\phibf_2(\cdot)$};
    \node [linblock, right of=phi2] (W3) {$W_3,b_3$};
    \node [nlblock, right of=W3] (phi4) {$\phibf_4(\cdot)$};
    \node [yvar,right of=phi4, node distance=2cm] (y) {};

    \path[->] (z0) edge  node [above] {$\zbf^0_0$} (W1);
    \path[->] (W1) edge  node [above] {$\zbf^0_1$} (phi2);
    \path[->] (phi2) edge  node [above] {$\zbf^0_2$} (W3);
    \path[->] (W3) edge  node [above] {$\zbf^0_3$} (phi4);
    \path[->] (phi4) edge  node [above] {$\ybf$} (y);

    \node [nlest,below of=z0] (h0) {$\gbf^+_0(\cdot)$};
    \node [linest,below of=W1] (h1) {$\gbf^{\pm}_1(\cdot)$};
    \node [nlest,below of=phi2] (h2) {$\gbf^{\pm}_2(\cdot)$};
    \node [linest,below of=W3] (h3) {$\gbf^{\pm}_3(\cdot)$};
    \node [nlest,below of=phi4] (h4) {$\gbf^-_4(\cdot)$};
    \node [yvar,right of=h4, node distance=2cm] (y2) {};
    \path [draw,->] (y2) edge node [above] {$\ybf$} (h4.east);

    \foreach \i/\j in {0/1,1/2,2/3,3/4} {
        \node [right of=h\i,xshift=\xoffa cm,yshift=\yoff cm] (conn0\i) {};
        \node [right of=h\i,xshift=\xoffb cm,yshift=\yoff cm] (conn1\i) {};
        \node [right of=h\i,xshift=\xoffa cm,yshift=-\yoff cm] (conn2\i) {};
        \node [right of=h\i,xshift=\xoffb cm,yshift=-\yoff cm] (conn3\i) {};
        \draw [fill=blue!20] (conn1\i) circle (0.1cm);
        \draw [fill=blue!20] (conn2\i) circle (0.1cm);

        \path [draw,->] (conn0\i) -- (conn2\i);
        \path [draw,->] (conn3\i) -- (conn1\i);

        \path[->] ([yshift=\yoff cm]h\i.east) edge node  [above]
            {$\widehat{\zbf}^+_{k\i}$} (conn1\i);
        \path[->] (conn1\i) edge node  [above]
            {$\rbf^+_{k\i}$} ([yshift=\yoff cm]h\j.west);
        \path[->] ([yshift=-\yoff cm]h\j.west) edge node [below]
            {$\widehat{\zbf}^-_{k\i}$} (conn2\i);
        \path[->] (conn2\i) edge node  [below]
            {$\rbf^-_{k\i}$} ([yshift=-\yoff cm]h\i.east);

    }


\end{tikzpicture}
\caption{Top panel:  Feedfoward neural network mapping an input $\zbf_0$ to output
$\ybf$ in the case of $L=4$ layers.
Bottom panel:  ML-VAMP inference algorithm for recovering estimates for the input
and hidden states from the output $\ybf$. \label{fig:nn_ml_vamp}}
\end{figure*}

Many deep network-based reconstruction methods perform maximum a priori (MAP) estimation
via minimization of the negative log likelihood \cite{yeh2016semantic,bora2017compressed}
or an equivalent regularized least-squares objective
\cite{chang2017one}.
MAP minimization is readily implementable
and has worked successfully in practice in problems such
as inpainting and compressed sensing.
MAP estimation also provides an alternative
to a separately learned reconstruction network
such as
\cite{mousavi2015deep,metzler2017learned,borgerding2017amp}.
However, due to the non-convex nature of the objective function,
MAP estimation has been difficult to analyze rigorously.  For example,
results such as \cite{hand2017global}
provide only general scaling laws while the guarantees in \cite{shah2018solving} require that
a non-convex projection operation can
be performed exactly.

To better understand MAP-based reconstruction,
this work considers inference in deep networks via
approximate message passing (AMP).  AMP  \cite{DonohoMM:09}
and its variants refer to a
powerful class of techniques for inverse problems that are both computationally efficient
and admit provable guarantees in certain high-dimensional limits.
Recent works
\cite{manoel2017multi,fletcher2018inference,gabrie2018entropy,reeves2017additivity}
have developed and analyzed variants of AMP for inference in multi-layer networks
such as \eqref{eq:nntrue}.
The methods generally consider minimum mean squared error (MMSE) inference
and estimation of the posterior density of the hidden units $\zbf_\ell$ from $\ybf$.
Similar to other AMP methods,
such MMSE-based multi-layer versions of AMP can be rigorously analyzed in cases with with large
random transforms.
This work specifically considers an extension of the multi-layer vector AMP
(ML-VAMP) method proposed in \cite{fletcher2018inference}.
ML-VAMP is derived from the recently-developed
VAMP method of \cite{rangan2017vamp,ma2017orthogonal,takeuchi2017rigorous} which is itself based on
expectation propagation \cite{minka2001expectation}
and expectation consistent approximate inference \cite{opper2005expectation,cakmak2014samp}.
Importantly, in the case of large random transforms, it is shown in \cite{fletcher2018inference}
that the reconstruction error  of ML-VAMP with MMSE estimation
can be exactly predicted, enabling much sharper results than other analysis techniques.
Moreover, under certain testable conditions ML-VAMP
can provably asymptotically achieve the Bayes optimal estimate, even for non-convex problems.

However, MAP estimation is often preferable to MMSE inference since MAP can be
formulated as an unconstrained optimization and implemented
easily via  standard deep learning optimizers
\cite{yeh2016semantic,bora2017compressed,chang2017one}.
This work thus considers a MAP version of ML-VAMP.
We show two key results.
First, it is shown that
the iterations in MAP ML-VAMP can be regarded as a variant
of an ADMM-type minimization \cite{boyd2011distributed} of the MAP objective.
This result is similar to earlier
connections between AMP and ADMM in~\cite{rangan2016fixed,rangan2017inference,manoel2018approximate}.
In particular, when MAP ML-VAMP converges, its fixed points are critical points of the
MAP objective.  Secondly, similar to the MMSE ML-VAMP considered in \cite{fletcher2018inference},
we can  rigorously analyze MAP ML-VAMP in a large system limit (LSL) with high-dimensional
random transforms $\Wbf_\ell$.  It is shown that, in the LSL,
the per iteration mean squared error of the
estimates can be exactly characterized by a state evolution (SE).  The SE tracks the correlation
between the estimates and true values at each layer and are only slightly more complex than
the SE updates for the MMSE case.  The SE enables an exact characterization of the
error of MAP estimation as a function of the network architecture,
parameters and noise levels.

\iftoggle{conference}{
For space considerations, all proofs are
contained in a full paper
\cite{pandit2019asymptotics-arxiv}.
In addition to the proofs, the full paper
includes further discussion with prior work,
as well as algorithm and simulation details.
}
{}

\begin{algorithm}[t]
\caption{ML-VAMP}
\begin{algorithmic}[1]  \label{algo:ml-vamp}
\REQUIRE{Forward estimation functions $\gbf_\ell^+(\cdot)$, $\ell=0,\ldots,\Lm1$ and
backward estimation functions $\gbf_\ell^-(\cdot)$, $\ell=1,\ldots,L$.}%
\STATE{Initialize $\rbf^-_{0\ell}=\zero$}
\FOR{$k=0,1,\dots,N_{\rm it}-1$}

    \STATE{// Forward Pass }
    \STATE{$\zbfhat^+_{k0} = \gbf_0^+(\rbf^-_{k0},\theta^+_{k0})$}
        \label{line:zp0}
    \STATE{$\alpha^+_{k0} = \bkt{\partial \gbf_0^+(\rbf^-_{k0},\theta^+_{k0})/
            \partial \rbf^-_{k\ell}}$}
            \label{line:alphap0}
    \STATE{$\rbf^+_{k0} = (\zbfhat^+_{k0} - \alpha^+_{k0}\rbf^-_{k0})/(1-\alpha^+_{k0})$}
            \label{line:rp0}
    \FOR{$\ell=0,\ldots,\Lm1$}
        \STATE{$\zbfhat^+_{k\ell} =
        \gbf_\ell^+(\rbf^+_{k,\lm1},\rbf^-_{k\ell},\theta_{k\ell}^+)$}
        \label{line:zp}
        \STATE{$\alpha^+_{k\ell} = \bkt{\partial
            \gbf_\ell^+(\rbf^+_{k,\lm1},\rbf^-_{k\ell},\theta_{k\ell}^+)/
            \partial \rbf^-_{k\ell}}$}
            \label{line:alphap}
        \STATE{$\rbf^+_{k\ell} = (\zbfhat^+_{k\ell} - \alpha^+_{k\ell}\rbf^-_{k\ell})/
            (1-\alpha^+_{k\ell})$}       \label{line:rp}
    \ENDFOR
    \STATE{}

    \STATE{// Reverse Pass }
    \STATE{$\zbfhat^-_{k,\Lm1} =
        \gbf_{L}^-(\rbf^+_{k,\Lm1},\theta^-_{k,L})$}   \label{line:znL}
    \STATE{$\alpha^-_{k,L} = \bkt{\partial
        \gbf_{L}^-(\rbf^+_{k,\Lm1},\theta^-_{k,L})/ \partial \rbf^+_{k,\Lm1}}$}
        \label{line:alphanL}
   \STATE{$\rbf^-_{\kp1,\Lm1} = (\zbfhat^-_{k,\Lm1}
            - \alpha^-_{k,\Lm1}\rbf^+_{k,\Lm1})/ (1-\alpha^-_{k,\Lm1}$)}
            \label{line:rnL}
    \FOR{$\ell=L-2,\ldots,0$}
        \STATE{$\zbfhat^-_{k\ell} =
            \gbf_{\lp1}^-(\rbf^+_{k\ell},\rbf^-_{\kp1,\lp1},\theta^-_{k,\lp1})$}
            \label{line:zn}
        \STATE{$\alpha^-_{k\ell} = \bkt{\partial
        \gbf_{\lp1}^-(\rbf^+_{k\ell},\rbf^-_{\kp1,\lp1},\theta^-_{k,\lp1}) /
            \partial \rbf^+_{k\ell}}$}
            \label{line:alphan}
        \STATE{$\rbf^-_{\kp1,\ell} = (\zbfhat^-_{k\ell}
            - \alpha^-_{k\ell}\rbf^+_{k\ell})/ (1-\alpha^-_{k\ell})$}
            \label{line:rn}
    \ENDFOR

\ENDFOR
\end{algorithmic}
\end{algorithm}

\section{ML-VAMP for MAP Inference}
We consider inference in a probabilistic setting where,
in \eqref{eq:nntrue}, $\zbf^0_0$ and $\xibf_\ell$ are modeled as random vectors
with some known densities.  Inference can be then performed by MAP estimation,
\beq \label{eq:Jmapopt}
    \zbfhat = \underset{\zbf}\argmin\  J(\zbf,\ybf),
\eeq
where $J(\zbf,\ybf)$ is the negative log posterior,
\beq \nonumber
    J(\zbf,\ybf) := -\ln p(\zbf_0) - \sum_{\ell=1}^{\Lm1} \ln p(\zbf_\ell|\zbf_{\lm1})
        -\ln p(\ybf|\zbf_{\Lm1}),
\eeq
where $p(\zbf_0)$ is the prior on the initial input $\zbf^0_0$ and
$\ln p(\zbf_\ell|\zbf_{\lm1})$ is defined implicitly from the probability distribution on the
noise terms $\xibf_\ell$ and the updates in \eqref{eq:nntrue}.

The ML-VAMP algorithm from \cite{fletcher2018inference}
for the inference problem is shown in Algorithm~\ref{algo:ml-vamp}.
For each hidden output $\zbf_\ell$, the algorithm produces two estimates $\zbfhat^+_{k\ell}$
and $\zbfhat^-_{k\ell}$ indexed by the iteration number $k$.  In each iteration, there is a
forward pass that produces the estimates $\zbfhat^+_{k\ell}$ and a reverse pass that produces
the estimates $\zbfhat^-_{k\ell}$.  The estimates are produced by a set of \emph{estimation
functions} $\gbf_\ell^{\pm}(\cdot)$ with parameters
$\theta^{\pm}_{k\ell}$.   The recursions are illustrated in the bottom panel of
Fig.~\ref{fig:nn_ml_vamp}.

For MAP inference, we propose the following estimation functions $\gbf_\ell^{\pm}(\cdot)$:
For $\ell=1,\ldots,L-2$, let
$\theta_\ell = (\gamma_{\lm1}^+,\gamma^-_\ell)$, and define the energy function,
\begin{align}
    \MoveEqLeft
    J_\ell(\zbf_{\lm1}^{ {-}{}},\zbf_\ell^{{+}{}};\rbf^+_{\lm1},\rbf^-_\ell,\theta_\ell)
    := -\ln p(\zbf_\ell^{ {+}{}}|\zbf_{\lm1}^{ {-}{}})
    \nonumber \\
    &+
    \frac{\gamma^+_{\lm1}}{2}\|\zbf_{\lm1}^{ {-}{}}-\rbf^+_{\lm1}\|^2
    + \frac{\gamma^-_{\ell}}{2}\|\zbf_{\ell}^{ {+}{}}-\rbf^-_{\ell}\|^2.\label{eq:defineEnergy_l}
\end{align}
In the MMSE inference problem considered in \cite{fletcher2018inference},
the estimation functions $\gbf_\ell^{\pm}$ are given by the expectation with respect to the joint density,
$p(\zbf_{\lm1}^{ {-}{}},\zbf_\ell^{ {+}{}}) \propto \exp[-J_\ell(\cdot)]$.
In this work, we consider the MAP
estimation functions given by the mode of this density:
\begin{align} \label{eq:gmap}
    \left((\gbf_{\ell}^-(\rbf^+_{\lm1},\rbf^{ {-}{+}}_\ell,\theta_\ell),
    \gbf_\ell^+(\rbf^+_{\lm1},\rbf^{ {-}{+}}_\ell,\theta_\ell)\right)
        := (\zbfhat^-_{\lm1},\zbfhat^+_{\ell})
\end{align}
where
\beq\label{eq:mapEnergy_l}
    (\zbfhat^-_{\lm1},\zbfhat^+_\ell) = \argmin_{\zbf_{\lm1}^-,\zbf_\ell^+}
        J_\ell(\zbf_{\lm1}^{ {-}{}},\zbf_\ell^{ {+}{}};\rbf^+_{\lm1},\rbf^-_\ell,\theta_\ell).
\eeq
Similar equations hold for $\ell=0$ and $\ell=\Lm1$ by removing the terms for $\ell=0$ and $L$.

In the MMSE inference in \cite{fletcher2018inference}, the parameters $\theta_{k\ell}^\pm$ are selected
as,
\beq \label{eq:thetagam}
    \theta_{k\ell}^+ = (\gamma_{k,\lm1}^+,\gamma^-_{k\ell}),
    \quad
    \theta_{k\ell}^- = (\gamma_{\kp1,\lm1}^+,\gamma^-_{k\ell}),
\eeq
where the precision levels $\gamma_{k\ell}^\pm$ are updated by the recursions,
\begin{align} \label{eq:gamupdate}
\begin{split}
    \gamma^+_{k\ell} &= \eta^+_{k\ell} - \gamma^-_{k\ell}, \quad
            \eta^+_{k\ell} = \gamma^-_{k\ell}/\alpha^+_{k\ell} \\
    \gamma^-_{\kp1,\ell} &= \eta^-_{k\ell} - \gamma^+_{k\ell}, \quad
            \eta^-_{k\ell} = \gamma^+_{k\ell}/\alpha^-_{k\ell}.
\end{split}
\end{align}
We can use the same updates for MAP ML-VAMP, although some
of our analysis will apply to arbitrary parameterizations.

\section{Fixed Points and Connections to ADMM} \label{sec:fixed}

Our first results relates MAP ML-VAMP to an ADMM-type minimization of the
MAP objective \eqref{eq:Jmapopt}.  To simplify the presentation, we consider MAP estimation functions
\eqref{eq:gmap} with \emph{fixed values} $\gamma^{\pm}_\ell>0$.  Also, we replace
the $\alpha^{\pm}_{k\ell}$ updates in Algorithm~\ref{algo:ml-vamp} with fixed values,
\beq \label{eq:alphafix}
    \alpha^+_\ell=\gamma^-_\ell/\eta_\ell, \ \
    \alpha^-_\ell=\gamma^+_\ell/\eta_\ell,\ \ {\rm and}\ \  \eta_\ell = \gamma^+_\ell+\gamma^-_\ell.
\eeq
Now, to apply ADMM \cite{boyd2011distributed} to the  MAP optimization~\eqref{eq:Jmapopt}, we use \emph{variable splitting} where
we replace each variable $\zbf_\ell$ with two copies $\zbf_\ell^+$ and $\zbf^-_\ell$.
Then, we define the objective function,
\begin{align}
   \MoveEqLeft F(\zbf^+,\zbf^-) := -\ln p(\zbf^+_0)  \nonumber \\
    &- \sum_{\ell=1}^{\Lm1} \ln p(\zbf^+_\ell|\zbf^-_{\lm1})
   -\ln p(\ybf|\zbf_{\Lm1}^{-}), \label{eq:Fsplit}
\end{align}
over the groups of variables $\zbf^{\pm} = \{ \zbf^\pm_\ell \}$.
The minimization in~\eqref{eq:Jmapopt} is then equivalent
to the constrained optimization,
\begin{align} \label{eq:Fmincon}
    \min_{\zbf^+,\zbf^-} \ F(\zbf^+,\zbf^-)
    \mbox{ s.t. }\
    \zbf^+_\ell=\zbf^-_\ell\ ~\forall\ \ell.
\end{align}

Corresponding to this constrained optimization, define the augmented Lagrangian,
\begin{align}
    \mc L(\zbf^+,\zbf^-,\sbf) = &F(\zbf^+,\zbf^-)
    + \sum_{\ell={0}}^{\Lm1} \eta_\ell\sbf\tran_\ell(\zbf^+_\ell-\zbf_\ell^-)\nonumber\\ &+\sum_{\ell=0}^{L-1}\frac{\eta_\ell}{2}\|\zbf_{\ell}^+-\zbf_{\ell}^-\|^2,
    \label{eq:Lagdef}
\end{align}
where $\sbf=\{\sbf_\ell\}$ are a set of dual parameters and $\gamma_\ell^{\pm}>0$ are weights
and $\eta_\ell = \gamma^+_\ell+\gamma^-_\ell$.   Now, for $\ell=1,\ldots,L-2$, define
\begin{align}
    \MoveEqLeft \mc L_\ell(\zbf_{\lm1}^-,\zbf_\ell^+;\zbf_{\lm1}^+,\zbf_\ell^-,\sbf_{\lm1},\sbf_\ell)
    := -\ln p(\zbf_\ell^+|\zbf_{\lm1}^-) \nonumber \\
    & +\eta_\ell\sbf\tran_\ell\zbf^+_\ell -\eta_{\lm1}\sbf\tran_{\lm1}\zbf_{\lm1}^- \nonumber \\
    &+ \frac{\gamma^+_{\lm1}}{2}\|\zbf^-_{\lm1} - \zbf^+_{\lm1}\|^2
    + \frac{\gamma^-_\ell}{2}\|\zbf^+_{\ell} - \zbf^-_{\ell}\|^2,  \label{eq:Laug}
\end{align}
which represents the terms in the Lagrangian $\mc L(\cdot)$ in \eqref{eq:Lagdef}
that contain $\zbf_{\lm1}^-$ and $\zbf_\ell^+$.
Similarly, define $\mc L_0(\cdot)$ and $\mc L_{\Lm1}(\cdot)$ using $p(\zbf_0^+)$ and $p({\bf y}| \zbf^+_{L-1})$.
\iftoggle{conference}{}{One can verify that
\beq\nonumber
\mc L(\zbf^+,\zbf^-,\sbf) = \sum_{\ell=0}^{L-1}\mc L_\ell(\zbf_{\lm1}^-,\zbf_\ell^+;\zbf_{\lm1}^+,\zbf_\ell^-,\sbf_{\lm1},\sbf_\ell).
\eeq
}

\begin{theorem} \label{thm:mapfix}
Consider the outputs of the {\rm ML-VAMP} (Algorithm~\ref{algo:ml-vamp}) with MAP estimation functions
\eqref{eq:gmap} for \emph{fixed} $\gamma_\ell^{\pm}>0$.
Suppose lines \ref{line:alphap} and \ref{line:alphan} are replaced with fixed values $\alpha^{\pm}_{k\ell}=\alpha^{\pm}_{\ell}\in(0,1)$ from \eqref{eq:alphafix}.
Let,
\begin{align}\label{eq:defines}
\sbf^{-}_{k\ell} := \alpha^{+}_{k\ell}(\zbfhat_{\km1,\ell}^--\rbf^-_{k\ell}), \quad
\sbf^{+}_{k\ell} := \alpha^{-}_{k\ell}(\rbf^+_{k\ell}-\zbfhat_{k\ell}^+).
\end{align}
Then, the forward pass iterations satisfy,
\begin{subequations}
\begin{align}
    \underline{\hspace{.3cm}}\,,\zbfhat^+_{k\ell} &= \argmin_{(\zbf_{\lm1}^-,\zbf^+_{\ell})}\
        \mc L_\ell(\zbf^-_{\lm1},\zbf^+_\ell;\zbfhat^+_{k,\lm1},\zbfhat^-_{\km1,\ell},\sbf_{k,\lm1}^+,\sbf_{k\ell}^-)  \label{eq:admmHp}\\
    \sbf_{k\ell}^+ &= \sbf_{k\ell}^- + \alpha^+_\ell(\zbfhat^+_{k\ell}-\zbfhat^-_{\km1,\ell}). \label{eq:admmsp}
\end{align}
\end{subequations}
whereas the backward pass iterations satisfy,
\begin{subequations}
\begin{align}
    \MoveEqLeft \zbfhat^-_{k,\lm1},\,\underline{\hspace{.3cm}}\, \nonumber \\
        &= \argmin_{(\zbf_{\lm1}^-,\zbf^+_{\ell})}\
        \mc L_\ell(\zbf^-_{\lm1},\zbf^+_\ell;\zbfhat^+_{k,\lm1},\zbfhat^-_{k\ell},
            \sbf_{k,\lm1}^+,\sbf_{\kp1,\ell}^-)  \label{eq:admmHn} \\
    \MoveEqLeft \sbf_{\kp1,\lm1}^- = \sbf_{k,\lm1}^+ + \alpha^-_{\lm1}(\zbfhat^+_{k,\lm1}
        -\zbfhat^-_{k,\lm1}).
    \label{eq:admmsn}
\end{align}
\end{subequations}
for $\ell=0,\ldots,L-1$. Further, any fixed point of Algorithm 1 corresponds to a critical point of the Lagrangian \eqref{eq:Lagdef}.
\end{theorem}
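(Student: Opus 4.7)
The plan is to show that with the auxiliary dual variables $\sbf_{k\ell}^{\pm}$ defined by \eqref{eq:defines}, every step of MAP ML-VAMP can be rewritten in the ADMM form claimed. The key observation is that the definition \eqref{eq:defines} is precisely what is needed for the MAP energy $J_\ell$ from \eqref{eq:defineEnergy_l} to agree, up to a constant independent of the optimization variables, with the partial augmented Lagrangian $\mc L_\ell$ from \eqref{eq:Laug} after the correct substitutions. This identification is the only non-routine step; the rest is bookkeeping.

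Concretely, for the forward pass I would compare $J_\ell(\zbf_{\lm1}^-,\zbf_\ell^+;\rbf^+_{k,\lm1},\rbf^-_{k\ell},\theta_\ell)$ with $\mc L_\ell(\zbf_{\lm1}^-,\zbf_\ell^+;\zbfhat^+_{k,\lm1},\zbfhat^-_{\km1,\ell},\sbf_{k,\lm1}^+,\sbf_{k\ell}^-)$, expand the quadratic penalty terms, and match the linear coefficients of $\zbf_{\lm1}^-$ and $\zbf_\ell^+$. This produces the two identities
\begin{align*}
\eta_{\lm1}\,\sbf^+_{k,\lm1} &= \gamma_{\lm1}^+(\rbf^+_{k,\lm1} - \zbfhat^+_{k,\lm1}), \\
\eta_\ell\,\sbf^-_{k\ell} &= \gamma_\ell^-(\zbfhat^-_{\km1,\ell} - \rbf^-_{k\ell}),
\end{align*}
which, via $\alpha^-_\ell = \gamma^+_\ell/\eta_\ell$ and $\alpha^+_\ell = \gamma^-_\ell/\eta_\ell$ from \eqref{eq:alphafix}, are exactly \eqref{eq:defines}. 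Hence $J_\ell$ and $\mc L_\ell$ differ only by a constant in $(\zbf_{\lm1}^-,\zbf_\ell^+)$, so the MAP update $\argmin J_\ell$ coincides with the partial-Lagrangian minimizer in \eqref{eq:admmHp}. The backward identity \eqref{eq:admmHn} follows from the same calculation with a shift of iteration indices, producing the analogous identity for $\sbf^-_{\kp1,\ell}$.

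For the dual updates \eqref{eq:admmsp} and \eqref{eq:admmsn}, I would use the $\rbf$-recursion on line \ref{line:rp} together with \eqref{eq:alphafix}. Rearranging line \ref{line:rp} and using $1-\alpha^+_\ell = \alpha^-_\ell$ gives $\alpha^-_\ell\rbf^+_{k\ell} + \alpha^+_\ell\rbf^-_{k\ell} = \zbfhat^+_{k\ell}$. Substituting this into the combination $\sbf^+_{k\ell} - \sbf^-_{k\ell}$ via \eqref{eq:defines}, and invoking the key identity $\alpha^+_\ell + \alpha^-_\ell = 1$, produces \eqref{eq:admmsp} after a short calculation; \eqref{eq:admmsn} follows symmetrically from line \ref{line:rn}. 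The boundary cases $\ell=0$ and $\ell=\Lm1$ are simpler variants in which $\mc L_0$ and $\mc L_{\Lm1}$ depend on only a single primal variable; the same substitution scheme applies.

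For the fixed-point statement, at any stationary iterate all $k$-indices collapse, and \eqref{eq:admmsp}, \eqref{eq:admmsn} become $\sbf^+_\ell - \sbf^-_\ell = \alpha^+_\ell(\zbfhat^+_\ell - \zbfhat^-_\ell)$ and $\sbf^-_\ell - \sbf^+_\ell = \alpha^-_\ell(\zbfhat^+_\ell - \zbfhat^-_\ell)$ respectively. Adding and using $\alpha^+_\ell + \alpha^-_\ell = 1$ forces $\zbfhat^+_\ell = \zbfhat^-_\ell$, i.e., primal feasibility of the constraint $\zbf^+_\ell = \zbf^-_\ell$, and hence $\sbf^+_\ell = \sbf^-_\ell$. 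Under these equalities, the first-order optimality conditions of \eqref{eq:admmHp} and \eqref{eq:admmHn} reduce to $\partial_{\zbf^+_\ell}\mc L_\ell = 0$ and $\partial_{\zbf^-_{\lm1}}\mc L_\ell = 0$; since each primal variable appears in exactly one partial Lagrangian, summing over $\ell$ yields stationarity of the full Lagrangian $\mc L$. Combined with feasibility (which is exactly $\partial_{\sbf_\ell}\mc L = 0$), this is precisely a critical point of \eqref{eq:Lagdef}. The main obstacle is purely the first step—identifying the change of variables $\sbf$ that makes $J_\ell$ and $\mc L_\ell$ coincide; once \eqref{eq:defines} is in hand, every remaining assertion follows by direct algebra.
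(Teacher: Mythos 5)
Your proposal is correct and follows essentially the same route as the paper's proof: it identifies $J_\ell$ with $\mc L_\ell$ (up to a constant in the active variables) via the change of variables \eqref{eq:defines}, derives the dual updates from line~\ref{line:rp}/\ref{line:rn} together with $\alpha^+_\ell+\alpha^-_\ell=1$, and then establishes primal feasibility and stationarity at fixed points. The only cosmetic difference is that you obtain $\zbfhat^+_\ell=\zbfhat^-_\ell$ by adding the two fixed-point dual relations, whereas the paper reads it off directly from lines~\ref{line:rp} and~\ref{line:rn}; the two arguments are equivalent.
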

\iftoggle{conference}{}{
\begin{proof}  See Appendix~\ref{sec:mapfixpf}.
\end{proof}
}

As shown in the above result, the fixed $(\alpha_\ell^{\pm})$ version of ML-VAMP is an ADMM-type algorithm for solving the optimization problem \eqref{eq:Fmincon}.
\iftoggle{conference}{}{For $\alpha_\ell^+=\alpha^-_\ell,$ its convergence properties have been studied extensively under the name Peaceman-Rachford Splitting Method (PRSM) (see \cite[eqn. (3)]{he2016application} and \cite[eqn. (1.12)]{han2012convergence}, and the references therein).}
The full ML-VAMP algorithm adaptively updates $(\alpha_{k\ell}^{\pm})$ to  the take into account information regarding the curvature of the objective in \eqref{eq:gmap}. Note that in \eqref{eq:admmHp} and \eqref{eq:admmHn}, we compute the joint minima over $(\zbf^+_{\lm1},\zbf^+_\ell)$, but only use one of them at a time.


\section{Analysis in the Large System Limit} \label{sec:seevo}

As mentioned in the Introduction, the paper \cite{fletcher2018inference} provides
an analysis of ML-VAMP with MMSE estimation functions
in a certain large system limit (LSL).
We extend this analysis
to general estimators, including the MAP estimators~\eqref{eq:gmap}.
The LSL analysis has the same basic assumptions as \cite{fletcher2018inference}.
\iftoggle{conference}{Details are in the full paper and can be summarized as follows.}{
Details of the assumptions are given in Appendix~\ref{sec:lsl}.  The key assumptions are summarized
as follows.

}
We consider a sequence of problems indexed by $N$.
For each $N$, and $\ell=1,3,\ldots,\Lm1$, suppose that the weight matrix $\Wbf_\ell$ has the SVD
\beq \label{eq:WSVD}
    \Wbf_{\ell} = \Vbf_{\ell}\Sigmabf_\ell\Vbf_{\lm1}, \quad
    \Sigmabf_{\ell} =
    \left[ \begin{array}{cc}
    \Diag(\sbf_\ell) & \zero \\
    \zero & \zero  \end{array} \right] \in \R^{N_\ell \x N_{\lm1}},
\eeq
where
$\Vbf_\ell$ and $\Vbf_{\lm1}$ are orthogonal matrices,
the vector $\sbf_\ell = (s_{\ell 1},\ldots,s_{\ell R_\ell})$ contains
singular values,
and $\rank(\Wbf_\ell)\leq R_{\ell}$.
Also, let $\bar{\bbf}_\ell := \Vbf_\ell\tran\bbf_\ell$ and
\iftoggle{conference}{
$\bar{\xibf}_\ell := \Vbf_\ell\tran \xibf_\ell$.}{
$\bar{\xibf}_\ell := \Vbf_\ell\tran \xibf_\ell$ so that
\beq \label{eq:bxibar}
    \bbf_\ell = \Vbf_\ell\bar{\bbf}_\ell, \quad
    \xibf_\ell = \Vbf_\ell\bar{\xibf}_\ell.
\eeq
}
The number of layers $L$ is fixed and
the dimensions $N_\ell = N_\ell(N)$ and ranks $R_\ell = R_\ell(N)$ in each layer
are deterministic functions of $N$.
We assume that
$\lim_{N \arr \infty} N_\ell/N$ and $\lim_{N \arr \infty} R_\ell/N$ converge to non-zero constants,
so that the dimensions grow linearly with $N$.

For the estimation functions in the linear layers $\ell=1,3,\ldots,L-1$,
we assume that they are the MAP estimation functions \eqref{eq:gmap},
but the parameters $\gamma^+_{\lm1}$ and $\gamma^-_\ell$ can be chosen arbitrarily.
Since the conditional density $p(\zbf_\ell|\zbf_{\lm1})$ is given by the linear
update \eqref{eq:nnlintrue}, the MAP estimation function \eqref{eq:gmap}
is identical to the MMSE function and is given by a solution to a least squares problem.
For the nonlinear layers, $\ell=0,2,\ldots,L$, the estimation functions $\gbf_\ell(\cdot)$
can be arbitrary as long as they operate elementwise and are Lipschitz continuous.
For simplicity, we will assume that for all the estimation functions, the parameters
$\theta_{k\ell}$ are deterministic and fixed.  However, data dependent parameters
can also be considered as in \cite{fletcher2017inference}.

We follow the analysis methodology in \cite{BayatiM:11},
and assume that the signal realization $\zbf^0_\ell \in \R^{N_0}$ for $\ell=0$,
and the noise realizations $\xibf_\ell$ in the nonlinear stages $\ell=2,4,\ldots,L$,
\iftoggle{conference}{
all converge empirically to random variables $Z^0$ and $\Xi_\ell$.
For the linear stages $\ell=1,3,\ldots,\Lm1$,
let $\bar{\sbf}_\ell$ be the zero-padded singular value vector,
and we assume that $(\bar{\sbf}_\ell, \bar{\bbf}_\ell, \bar{\xibf}_\ell)$ also
converge empirically.}
{all converge empirically to random variables $Z^0$ and $\Xi_\ell$, i.e.,
\beq \label{eq:varinitnl}
    \lim_{N \arr \infty} \left\{ z^0_{0,n} \right\} \PLeq Z^0_0, \quad
    \lim_{N \arr \infty} \left\{ \xi_{\ell,n} \right\} \PLeq \Xi_\ell.
\eeq
Convergence $PL(2)$ is reviewed in Appendix~\ref{sec:empirical} -- see
\cite{BayatiM:11,fletcher2017inference} and elsewhere.
For the linear stages $\ell=1,3,\ldots,\Lm1$,
let $\bar{\sbf}_\ell$ be the zero-padded singular value vector,
\beq \label{eq:sbar}
    \bar{s}_{\ell,n}= \begin{cases}
        s_{\ell,n} & \mbox{if } n=1,\ldots,R_\ell, \\
        0          & \mbox{if } n=R_\ell+1,\ldots,N_\ell,
    \end{cases}
\eeq
so that $\bar{\sbf}_\ell \in \R^{N_\ell}$.
We assume that $\bar{\sbf}_\ell$,
the transformed bias $\bar{\bbf}_\ell=\Vbf_\ell\tran\bbf_\ell$, and the transformed noise
$\bar{\xibf}_\ell=\Vbf_\ell\tran\xibf_\ell$ all converge empirically as
\beq \label{eq:varinitlin}
    \lim_{N \arr \infty} \left\{ \bar{s}_{\ell,n},\bar{b}_{\ell,n},\bar{\xi}_{\ell,n} \right\}
        \PLeq (\bar{S}_\ell, \bar{B}_\ell,\bar{\Xi}_\ell),
\eeq
to independent random variables $\bar{S}_\ell$, $\bar{B}_\ell$, and $\bar{\Xi}_\ell$, with
$\bar{\Xi}_\ell \sim \Norm(0,\nu_\ell^{-1})$, where
$\nu_\ell$ is the noise precision.  We assume that $\bar{S}_\ell \geq 0$ and $\bar{S}_\ell \leq S_{\max}$
for some upper bound $S_{\max}$. }

Now define the quantities
\begin{align}  \label{eq:pq0}
\begin{split}
    \qbf^0_\ell &:= \zbf^0_\ell, \quad
    \pbf^0_\ell := \Vbf_\ell \qbf^0_\ell = \Vbf_\ell \zbf^0_\ell \quad \ell=0,2,\ldots,L \\
    \qbf^0_\ell &:= \Vbf_\ell\tran\zbf^0_\ell, \quad \pbf^0_\ell := \zbf^0_\ell = \Vbf_\ell \qbf^0_\ell, \quad \ell=1,3,\ldots,\Lm1,
\end{split}
\end{align}
which represent the true vectors $\zbf^0_\ell$ and their transforms.
For $\ell=0,2,\ldots,L-2$, we next define  the vectors:
\begin{subequations} \label{eq:pqdef}
\begin{align}
    &\qbfhat^{\pm}_{k\ell} = \zbfhat^{\pm}_{k\ell}, \quad
    \qbf^{\pm}_{k\ell} = \rbf_{k\ell}^\pm - \zbf^0_\ell, \label{eq:qdefeven} \\
    &\pbfhat^{\pm}_{k,\lp1} = \zbfhat^{\pm}_{k,\lp1}, \quad
    \pbf^{\pm}_{k,\lp1} = \rbf_{k,\lp1}^{\pm} - \zbf^0_{\lp1}, \label{eq:pdefodd}  \\
    &\qbfhat^{\pm}_{k,\lp1} = \Vbf_{\lp1}\tran\pbfhat^{\pm}_{k,\lp1}, \quad
    \qbf^{\pm}_{k,\lp1} = \Vbf_{\lp1}\tran\pbf^{\pm}_{k,\lp1} \label{eq:qdefodd} \\
    &\pbfhat^{\pm}_{k\ell} = \Vbf_\ell\qbfhat^{\pm}_{k\ell}, \quad
    \pbf^{\pm}_{k\ell} = \Vbf_\ell\qbf^{\pm}_{k\ell},  \label{eq:pdefeven}
\end{align}
\end{subequations}
The vectors $\qbfhat^{\pm}_{k\ell}$ and $\pbfhat^{\pm}_{k\ell}$ represent the estimates
of $\qbf^0_\ell$ and $\pbf^0_\ell$.
Also, the vectors $\qbf^{\pm}_{k\ell}$ and $\pbf^{\pm}_{k\ell}$
are the differences $\rbf_{k\ell}^{\pm}-\zbf^0_\ell$ or their transforms.  These
represent errors on the \emph{inputs} $\rbf_{k\ell}^\pm$ to the estimation functions
$\gbf^{\pm}_\ell(\cdot)$.

\begin{theorem} \label{thm:mapse}  Under the above assumptions,
for any fixed iteration $k$ and $\ell=1,\ldots,\Lm1$,
the components of
$\pbf^0_{\lm1}$, $\qbf^0_{\ell}$, $\pbf_{k,\lm1}^+$, $\qbf_{k\ell}^\pm$, $\qbfhat^+_{k\ell}$,
almost surely empirically converge jointly with limits,
\begin{align}
   \MoveEqLeft \lim_{N \arr \infty} \left\{
   (p^0_{\lm1,n},p^+_{k,\lm1,n},q^0_{\ell,n},q^-_{k\ell,n},q^+_{k\ell,n},\hat{q}^+_{k\ell,n}) \right\} \nonumber \\
        &\PLeq
        (P^0_{\lm1},P^+_{k,\lm1},Q^0_{\ell},Q^-_{k\ell}, Q^+_{k\ell},\hat{Q}_{k\ell}),
    \label{eq:PQplim1}
\end{align}
where the variables
$P^0_{\lm1}$, $P_{k\lm1}^+$ and $Q_{k\ell}^-$
are zero-mean jointly Gaussian random variables with
\begin{align} \label{eq:PQpcorr1}
\begin{split}
    &\Cov(P^0_{\lm1},P_{k,\lm1}^+) = \Kbf_{k,\lm1}^+, \quad \Exp(Q_{k\ell}^-)^2 = \tau_{k\ell}^-, \nonumber \\
    &\Exp(P_{k,\lm1}^+Q_{k\ell}^-)  = 0,  \quad \Exp(P^0_{\lm1}Q_{k\ell}^-)  = 0,
\end{split}
\end{align}
for parameters $\Kbf_{k,\lm1}^+$ and $\tau_{k\ell}^-$.
The identical result holds for $\ell=0$ with the variables $\pbf_{k,\lm1}^+$ and $P_{k,\lm1}^+$ removed.  Also, a similar result holds for the variables
$\pbf^0_{\lm1}$, $\pbf_{\kp1,\lm1}^+$, $\pbf_{k,\lm1}^+$,$\qbf_{\kp1,\ell}^-$.
\end{theorem}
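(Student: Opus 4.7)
The plan is to reduce Theorem~\ref{thm:mapse} to the general state-evolution machinery already developed for ML-VAMP in \cite{fletcher2018inference} and then verify that the MAP estimation functions \eqref{eq:gmap} fit the regularity hypotheses of that framework. In \cite{fletcher2018inference}, the SE is obtained for a ``Gen-ML'' (generalized multi-layer) recursion, instantiated there with MMSE denoisers. The underlying argument is a multi-layer extension of the conditioning technique of \cite{BayatiM:11}, and it uses of the concrete denoisers only (i) that they act componentwise in the SVD basis, (ii) that they are uniformly Lipschitz in $N$, and (iii) that the Onsager coefficients $\alpha^{\pm}_{k\ell}$ are consistent averages of the relevant partial derivatives. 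My strategy is therefore to recast MAP ML-VAMP as a Gen-ML recursion and check (i)--(iii).

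First I would dispose of the linear layers $\ell=1,3,\ldots,\Lm1$. Because $p(\zbf_\ell|\zbf_{\lm1})$ is Gaussian under \eqref{eq:nnlintrue}, the energy $J_\ell$ in \eqref{eq:defineEnergy_l} is quadratic, so its MAP minimizer equals the posterior mean and coincides with the linear-layer MMSE denoiser already handled in \cite{fletcher2018inference}. After rotating into the SVD basis via $\Vbf_\ell$ in \eqref{eq:WSVD}, this denoiser acts componentwise on the transformed variables $\pbf$ and $\qbf$ defined in \eqref{eq:pq0}--\eqref{eq:pqdef}. For the nonlinear layers $\ell=0,2,\ldots,L$, since $p(\zbf_\ell|\zbf_{\lm1})$ factors over components under the elementwise activations in \eqref{eq:nnnonlintrue} and the quadratic penalties in $J_\ell$ are separable, the MAP optimization \eqref{eq:mapEnergy_l} decouples into scalar subproblems and is elementwise by construction. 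Hence condition (i) holds in every layer.

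The main obstacle is condition (ii): establishing uniform Lipschitz continuity of the scalar MAP denoisers. For MMSE denoisers this is automatic, whereas the MAP minimizer of a non-convex energy need not be single-valued. I would therefore lean on the standing assumption that the nonlinear estimation functions are Lipschitz, and justify it as follows: when $-\ln p(\zbf_\ell|\zbf_{\lm1})$ is convex the MAP map is a proximal operator, and is thus firmly nonexpansive in $(\rbf^+_{\lm1},\rbf^-_\ell)$; when it is smooth but non-convex, the quadratic terms with precisions $\gamma^{\pm}$ supply strong convexity so that the implicit function theorem applied to the first-order optimality conditions yields bounded derivatives, and hence Lipschitzness, on the relevant input ranges. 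The divergence averages in lines~\ref{line:alphap} and \ref{line:alphan} (and their fixed counterparts \eqref{eq:alphafix}) are exactly the Onsager coefficients of the proximal map, and the update \eqref{eq:gamupdate} is the one prescribed by the Gen-ML framework, verifying (iii).

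Once (i)--(iii) are in place, I would invoke the Gen-ML SE theorem of \cite{fletcher2018inference} to obtain, by induction on the pass index $k$ and layer $\ell$, the joint $PL(2)$ limit \eqref{eq:PQplim1}. The Gaussian structure in \eqref{eq:PQpcorr1} follows from the orthogonal invariance of $\Vbf_\ell$ in the SVD \eqref{eq:WSVD}: conditional on the prior history of the algorithm, $\Vbf_\ell$ is Haar-distributed on its admissible manifold, and the standard Gaussian-concentration argument from \cite{BayatiM:11} then shows that its action produces zero-mean Gaussian limits with the covariance $\Kbf^+_{k,\lm1}$ for $(P^0_{\lm1},P^+_{k,\lm1})$, with $Q^-_{k\ell}$ independent of both and of variance $\tau^-_{k\ell}$. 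The recursions defining $\Kbf^+_{k,\lm1}$ and $\tau^-_{k\ell}$ are then read off by propagating the MAP functions \eqref{eq:gmap} through the SE map, using the limiting empirical laws of singular values, biases and noise given by \eqref{eq:varinitlin}. Finally, the companion statement for $\pbf^0_{\lm1}, \pbf_{\kp1,\lm1}^+, \pbf_{k,\lm1}^+, \qbf_{\kp1,\ell}^-$ follows by the same induction applied to the backward pass, with the roles of $(\pbf,\qbf)$ and of $(\gbf^+,\gbf^-)$ interchanged.
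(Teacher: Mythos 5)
Your proposal is correct and follows essentially the same route as the paper: the paper's proof rewrites MAP ML-VAMP in the SVD-transformed error coordinates \eqref{eq:pq0}--\eqref{eq:pqdef} as an instance of the general (Gen-ML) multi-layer recursion of the earlier ML-VAMP analysis --- alternating componentwise maps and multiplications by the orthogonal factors $\Vbf_\ell$ --- and then invokes that framework's state-evolution theorem, exactly as you outline, including the observation that the linear-layer MAP denoiser coincides with the MMSE one. The only cosmetic difference is that you sketch a justification of the Lipschitz continuity of the scalar MAP denoisers, whereas the paper simply imposes it as a standing assumption on the nonlinear estimation functions.
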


\iftoggle{conference}{The full paper \cite{pandit2019asymptotics-arxiv} provides a precise and simple description
of the limiting random variables on the right hand side of \eqref{eq:PQplim1}.
In addition, the parameters
$\Kbf^{\pm}_{k\ell}$ and $\tau_{k\ell}^{\pm}$ can be computed by deterministic recursive formulae, thus
representing a state evolution (SE) for the MAP ML-VAMP system.  }{
Appendix~\ref{sec:mapsepf} states and proves the complete result.
The complete results provides a precise and simple description
of all the limiting random variables on the right hand side of \eqref{eq:PQplim1}.
In particular, all the random variables are either Gaussian or the outputs of nonlinear functions
of Gaussian.  In addition, the parameters of the Gaussian random variables 
such as $\Kbf^{\pm}_{k\ell}$ and $\tau_{k\ell}^{\pm}$ 
are given by a deterministic recursive algorithm (Algorithm~\ref{algo:gen_se}).
The recursive updates thus
represent a state evolution (SE) for the MAP ML-VAMP system.
}
In the case of MMSE estimation functions,
the SE equations reduce to those of \cite{fletcher2017inference}.  

The importance of this limiting model is that we can
compute several important performance metrics of the ML-VAMP system.  
\iftoggle{conference}{
For example, for a non-linear layer,
$\ell=0,2,\ldots,L$, it is shown in the full paper \cite{pandit2019asymptotics-arxiv}
that the asymptotic mean-squared error (MSE) is given by,
\[
    \lim_{N \arr \infty} \frac{1}{N} \|\zbf^0_\ell - \zbfhat^+_{k\ell}\|^2
    = \Exp(Q^0_\ell - \hat{Q}^+_{k\ell})^2,
\]
where the expectation can be computed from the model from the random variables in \eqref{eq:PQplim1}.}{
For example, let 
$\ell=0,2,\ldots,L$ be the index of a nonlinear layer. 
Then, the asymptotic mean-squared error (MSE) is given by,
\begin{align*}
    \MoveEqLeft \lim_{N \arr \infty} \frac{1}{N} \|\zbf^0_\ell - \zbfhat^+_{k\ell}\|^2 \nonumber \\
    &\stackrel{(a)}{=}
    \lim_{N \arr \infty} \frac{1}{N} \|\qbf^0_\ell - \qbfhat^+_{k\ell}\|^2 
    \stackrel{(b)}{=} \Exp(Q^0_\ell - \hat{Q}^+_{k\ell})^2,
\end{align*}
where (a) follows from the definitions in \eqref{eq:pq0} and \eqref{eq:pqdef};
and (b) follows from the definition of empirical convergence.  
The expectation $\Exp(Q^0_\ell - \hat{Q}^+_{k\ell})^2$ can then 
be computed from the model from the random variables in \eqref{eq:PQplim1}.}
In this way, we see that MAP ML-VAMP provides a computationally tractable method for computing
critical points of the MAP objective with precise predictions on its performance.

\section{Numerical Simulations} \label{sec:sim}
To validate the MAP ML-VAMP algorithm and the LSL analysis, we
simulate the method in a random synthetic network  similar to
\cite{fletcher2017inference}.  Details are given in Appendix~\ref{sec:simdetails}.
Specifically, we consider a network
with $N_0 = 20$ inputs and two hidden stages with 100 and 500 units with ReLU activations.
The number of outputs is $N_y$ is varied.  In the final layer, AWGN noise is added at an SNR of 20~dB.
The weight matrices have Gaussian i.i.d. components and the biases $b_\ell$
are selected so that the ReLU outputs are non-zero, on average, for 40\% of the
samples.  For each value of $N_y$, we generate 40 random instances of the network
and compute (a) the MAP estimate using the Adam optimizer \cite{kingma2014adam} in Tensorflow;
(b) the estimate from MAP ML-VAMP; and (c) the MSE for MAP ML-VAMP predicted by the state evolution.
Fig.~\ref{fig:randmlp_sim} shows the median normalized MSE,
$10 \log_{10}(\|\zbf^0_\ell-\zbfhat^+_{k\ell}\|^2/\|\zbf^0_\ell\|^2)$ for the input variable
($\ell=0$) for the three methods.  We see that for $N_y\geq 100$, the actual performance of MAP ML-VAMP
matches the SE closely as well as the performance of MAP estimation via a generic solver.
For $N_y<100$, the match is still close, but there is a small discrepancy, likely due to the relatively
small size of the problem.  Also, for small $N_y$, MAP ML-VAMP appears to achieve a slightly better performance
than the Adam optimizer.  Since both are optimizing the same objective, the difference is likely due to the
ML-VAMP finding better local minima.

To demonstrate that MAP ML-VAMP can also work on a
simple non-random dataset,
Fig.~\ref{fig:mnist_inpaint} shows
samples of reconstructions results for inpainting
for MNIST digits.  A VAE \cite{kingma2013auto} is
used to train a generative model.
The MAP ML-VAMP reconstruction
obtains similar results as MAP inference using the
Adam optimizer, although sometimes different local
minima are found. The main benefit is that
MAP ML-VAMP can be rigorously analyzed.
Details are in the full paper
\cite{pandit2019asymptotics-arxiv}.

\begin{figure}
\centering
\includegraphics[width=0.8\columnwidth]{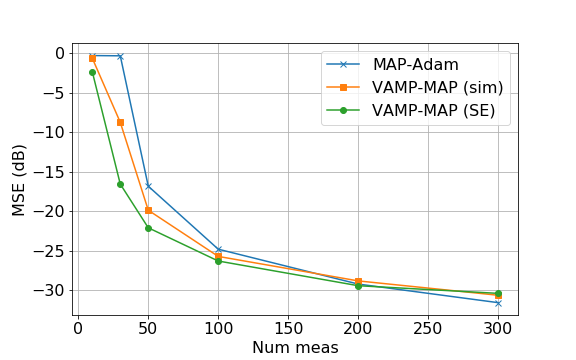}
\caption{Normalized MSE for a random multi-layer network for (a) MAP inference
computed by Adam optimizer; (b) MAP inference from ML-VAMP; (c) State evolution prediction.}
\label{fig:randmlp_sim}
\end{figure}

\begin{figure}
\centering
\includegraphics[width=0.8\columnwidth]{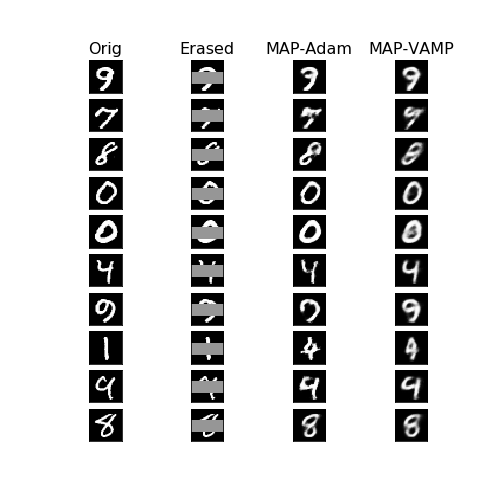}
\vspace{-0.5cm}
\caption{MNIST inpainting where the rows 10-20 of
the 28 $\times$ 28 digits are erased. }
\label{fig:mnist_inpaint}
\end{figure}

\section*{Conclusions}
\iftoggle{conference}{}{MAP inference combined with deep generative priors provides a powerful tool for
complex inverse problems.  Rigorous analysis of these methods has been difficult.}
ML-VAMP with MAP estimation provides a computationally tractable method for performing the MAP
inference with performance that can be rigorously and precisely characterized in a certain large
system limit.  The approach thus offers a new and potentially powerful approach for understanding
and improving deep network-based inference.

\iftoggle{conference}{}{

\appendices


\section{Proof of Theorem~\ref{thm:mapfix}} \label{sec:mapfixpf}

The linear equalities in \eqref{eq:defines} can be rewritten as,
\begin{subequations}  \label{eq:defr}
\begin{align}
    \rbf_{k\ell}^+ &= \zbfhat^+_{k\ell} + \frac{1}{\alpha^{-}_{k\ell}}\sbf^{+}_{k\ell}
          \\
    \rbf_{k+1,\ell}^- &= \zbfhat^-_{k\ell} - \frac{1}{\alpha^{+}_{k\ell}}\sbf^{-}_{\kp1,\ell} 
\end{align}
\end{subequations}
Substituting \eqref{eq:defr}  in lines \ref{line:rp} and \ref{line:rn} of Algorithm \ref{algo:ml-vamp} give the updates \eqref{eq:admmsp} and \eqref{eq:admmsn} in Theorem \ref{thm:mapfix}. It remains to show that the optimization problem in updates
\eqref{eq:admmHp} and \eqref{eq:admmHn} is equivalent to
\eqref{eq:mapEnergy_l}. It suffices to show that the terms dependent on $(z_{\ell-1}^-,z^{+}_\ell)$ in both the objective functions $J_\ell$ from \eqref{eq:mapEnergy_l} and $\mc L_\ell$ from \eqref{eq:admmHp} and \eqref{eq:admmHn} are identical. This follows immediately on substituting \eqref{eq:defr} in \eqref{eq:defineEnergy_l}.

It now suffices to show that any fixed point of Algorithm~\ref{algo:ml-vamp} is a critical point of the augmented Lagrangian in \eqref{eq:Lagdef}.
Since we are looking only at fixed points, we can drop the dependence
on the iteration $k$.  So, for example, we can write $\rbf_\ell^+$ for $\rbf_{k\ell}^+$.
To show that $\zbfhat^+_\ell,\zbfhat^-_\ell$ are critical points of the
constrained optimization~\eqref{eq:Fmincon}, we need to show that
there exists dual parameters $\sbf_\ell$ such that for all $\ell=0,\ldots,\Lm1$,
\beq \label{eq:zcon}
    \zbfhat^+_\ell=\zbfhat^-_\ell,
\eeq
\beq \label{eq:Laggrad}
    \frac{\partial \mc L(\zbfhat^+,\zbfhat^-,\sbf)}{\partial \zbf^+_\ell}=0, \quad
    \frac{\partial \mathcal{L} (\zbfhat^+,\zbfhat^-,\sbf)}{\partial \zbf^-_\ell}=0,
\eeq
where $\mc L(\cdot)$ is the Lagrangian in \eqref{eq:Lagdef}.

We first prove \eqref{eq:zcon} whereby primal feasibility is satisfied.
At any fixed point of \eqref{eq:gamupdate}, we have
\[
    \eta_\ell=\gamma^+_\ell + \gamma^-_\ell = \frac{\gamma^+_\ell}{\alpha_\ell^-} =
     \frac{\gamma_\ell^-}{\alpha_\ell^+}.
\]
Therefore,
\beq\label{eq:alphaone}
    \alpha_\ell^- = \frac{\gamma^+_\ell}{\gamma_\ell^++\gamma_\ell^-}
        = 1- \frac{\gamma^-_\ell}{\gamma_\ell^++\gamma_\ell^-} = 1-\alpha^+_\ell.
\eeq
Now, from line~\ref{line:rp} in Algorithm~\ref{algo:ml-vamp},
\begin{align}
     \zbfhat^+_\ell &= (1-\alpha^+_\ell)\rbf^+_\ell + \alpha^+_\ell\rbf^-_\ell \nonumber \\
       & = \alpha^-_\ell\rbf^+_\ell + \alpha^+_\ell\rbf^-_\ell, \label{eq:zhatfix1}
\end{align}
where the last step used \eqref{eq:alphaone}.  Similarly, from line~\ref{line:rn},
\beq
    \zbfhat^-_\ell = \alpha^-_\ell\rbf^+_\ell + \alpha^+_\ell\rbf^-_\ell. \label{eq:zhatfix2}
\eeq
Equations \eqref{eq:zhatfix1} and \eqref{eq:zhatfix2} prove \eqref{eq:zcon}.
In the sequel, we will let $\zbfhat_\ell$ denote $\zbfhat^+_\ell$ and $\zbfhat^-_\ell$
since they are equal. As a consequence of the primal feasibility $\zbfhat^+_\ell=\zbfhat^-_\ell$, observe that
\beq
\sbf_\ell^+-\sbf_\ell^- = (\alpha_\ell^++\alpha_\ell^-)\zbfhat_\ell - \alpha^+_\ell\rbf_\ell^--\alpha^-_\ell\rbf_\ell^+=0,
\eeq
where we have used \eqref{eq:alphaone} and \eqref{eq:zhatfix1}. Define $\sbf:=\sbf^+=\sbf^-$, by virtue of the equality shown above.

Having shown the equivalence of Algorithm \ref{algo:ml-vamp} and the iterative updates in the statement of the theorem, we can say that there exists a one-to-one linear mapping between their fixed points $\{\zbfhat,\rbf^{+},\rbf^{-}\}$ (from Algorithm \ref{algo:ml-vamp}) and $\{\zbfhat,\sbf\}$ (from Theorem \ref{thm:mapfix}). Now to show \eqref{eq:Laggrad} it suffices to show that $\sbf_\ell$ is a valid dual parameter for which the following stationarity conditions hold,
\begin{align} \label{eq:Laggrad_l}
    \frac{\partial \mc L_\ell(\zbf^-_{\ell-1},\zbf^+_{\ell};\zbfhat^+_{\ell-1},\zbfhat^-_{\ell},\sbf_{\ell-1},\sbf_{\ell})}{\partial \zbf^-_{\ell-1}}\Bigg{\rvert}_{(\zbfhat_{\ell-1}^-,\zbfhat_{\ell}^+)}\owns\ &\ {\bf 0}, \\
    \frac{\partial \mc L_\ell(\zbf^-_{\ell-1},\zbf^+_{\ell};\zbfhat^+_{\ell-1},\zbfhat^-_{\ell},\sbf_{\ell-1},\sbf_{\ell})}{\partial \zbf^+_\ell}\Bigg{\rvert}_{(\zbfhat_{\ell-1}^-,\zbfhat_{\ell}^+)}\owns\ &\ {\bf 0}.
\end{align}
Indeed the above conditions are the stationarity conditions of the optimization problem in \eqref{eq:admmHp} and \eqref{eq:admmHn}.

\section{Empirical Convergence of Random Variables} \label{sec:empirical}

We follow the framework of Bayati and Montanari \citep{BayatiM:11}, which models
various sequences as deterministic, but with components converging empirically
to a distribution.  We start with a brief review of useful definitions.
Let $\xbf = (\xbf_1,\ldots,\xbf_N)$ be a block vector with components $\xbf_n \in \R^r$
for some $r$.  Thus, the vector $\xbf$
is a vector with dimension $rN$.
Given any function $g:\R^r \arr \R^s$, we define the
\emph{componentwise extension} of $g(\cdot)$ as the function,
\beq \label{eq:gcomp}
    \gbf(\xbf) := (g(\xbf_1),\ldots,g(\xbf_N)) \in \R^{Ns}.
\eeq
That is,
$\gbf(\cdot)$
applies the function
$g(\cdot)$
on each $r$-dimensional component.
Similarly,
we say $\gbf(\xbf)$ \emph{acts componentwise} on $\xbf$ whenever it is of the form \eqref{eq:gcomp}
for some function $g(\cdot)$.

Next consider a sequence of block vectors of growing dimension,
\[
    \xbf(N) = (\xbf_1(N),\ldots,\xbf_N(N)),
\qquad
N=1,\,2,\,\ldots,
\]
where each component $\xbf_n(N) \in \R^r$.
In this case, we will say that
$\xbf(N)$ is a \emph{block vector sequence that scales with $N$
under blocks $\xbf_n(N) \in \R^r$.}
When $r=1$, so that the blocks are scalar, we will simply say that
$\xbf(N)$ is a \emph{vector sequence that scales with $N$}.
Such vector sequences can be deterministic or random.
In most cases, we will omit the notational dependence on $N$ and simply write $\xbf$.

Now, given $p \geq 1$,
a function $f:\R^r \arr \R^s$ is called \emph{pseudo-Lipschitz continuous of order $p$},
if there exists a constant $C > 0$ such that for all $\xbf_1,\xbf_2 \in\R^r$,
\[
    \| f(\xbf_1)- f(\xbf_2) \| \leq C\|\xbf_1-\xbf_2\|\left[ 1 + \|\xbf_1\|^{p-1}
    + \|\xbf_2\|^{p-1} \right].
\]
Observe that in the case $p=1$, pseudo-Lipschitz continuity reduces to
the standard Lipschitz continuity.
Given $p \geq 1$, we will say that the block vector sequence $\xbf=\xbf(N)$
\emph{converges empirically with $p$-th order moments} if there exists a random variable
$X \in \R^r$ such that
\begin{enumerate}[(i)]
\item $\Exp\|X\|_p^p < \infty$; and
\item for any $f : \R^r \arr \R$ that is pseudo-Lipschitz continuous of order $p$,
\beq \label{eq:PLp-empirical}
    \lim_{N \arr \infty} \frac{1}{N} \sum_{n=1}^N f(\xbf_n(N)) = \Exp\left[ f(X) \right].
\eeq
\end{enumerate}
In \eqref{eq:PLp-empirical}, we have
the empirical mean of the components $f(\xbf_n(N))$
of the componentwise extension $\fbf(\xbf(N))$
converging to the expectation $\Exp[ f(X) ]$.
In this case, with some abuse of notation, we will write
\beq \label{eq:plLim}
    \lim_{N \arr \infty} \left\{ \xbf_n \right\} \stackrel{PL(p)}{=} X,
\eeq
where, as usual, we have omitted the dependence on $N$ in $\xbf_n(N)$.
Importantly, empirical convergence can de defined on deterministic vector sequences,
with no need for a probability space.  If $\xbf=\xbf(N)$ is a random vector sequence,
we will often require that the limit \eqref{eq:plLim} holds almost surely.

We conclude with one final definition.
Let $\phibf(\rbf,\gamma)$ be a function on $\rbf \in \R^s$ and $\gamma \in \R$.
We say that $\phibf(\rbf,\gamma)$ is \emph{uniformly Lipschitz continuous} in $\rbf$
at $\gamma=\gammabar$ if there exists constants
$L_1$ and $L_2 \geq 0$ and an open neighborhood $U$ of $\gammabar$, such that
\beq \label{eq:unifLip1}
    \|\phibf(\rbf_1,\gamma)-\phibf(\rbf_2,\gamma)\| \leq L_1\|\rbf_1-\rbf_2\|,
\eeq
for all $\rbf_1,\rbf_2 \in \R^s$ and $\gamma \in U$; and
\beq \label{eq:unifLip2}
    \|\phibf(\rbf,\gamma_1)-\phibf(\rbf,\gamma_2)\| \leq L_2\left(1+\|\rbf\|\right)|\gamma_1-\gamma_2|,
\eeq
for all $\rbf \in \R^s$ and $\gamma_1,\gamma_2 \in U$.

\section{Large System Limit: Model Details} \label{sec:lsl}

In addition to the assumptions in Section~\ref{sec:seevo},
we describe a few more technical assumptions.
First, we need that the activation functions
$\phibf_\ell(z_{\lm1},\xi_\ell)$ in \eqref{eq:nnnonlintrue} act \emph{componentwise} meaning
that,
\beq \label{eq:phicomp}
 \left[ \phibf_{\ell}(\zbf_{\lm1}, \xibf_\ell) \right]_n
    = \phi_\ell(z_{\lm1,n},\xi_{\ell,n})
\eeq
for some scalar-valued function $\phi_\ell(\cdot)$ for all components $n$. That is, for a nonlinear layer
$\ell=2,4,\ldots,L$, each output $z^0_{\ell,n}$ depends only on the corresponding input component
$z^0_{\lm1,n}$.  Standard activations such as ReLU or sigmoid would satisfy this property.
In addition, we require that the activation function components $\phi_\ell(\cdot)$
are pseudo-Lipschitz continuous of order two.

Next, we
need certain assumptions on the estimation functions $\gbf_\ell^{\pm}(\cdot)$.
For the estimation functions corresponding to the  nonlinear layers,
$\ell=2,4,\ldots,L-2$, we assume that for each parameter $\theta_{k\ell}^-$, the function
$\gbf_\ell^+(\rbf_{\lm1}^+,\rbf_{\ell}^-,\theta_{\ell}^-)$ is Lipschitz
continuous in $(\rbf^+_{\lm1},\rbf^-_\ell)$ and $\gbf^+_\ell(\cdot)$ acts
\emph{componentwise} in that,
\beq \label{eq:gprow}
    \zbfhat^+_\ell = \gbf_\ell^+(\rbf^+_{\lm1},\rbf^-_\ell,\theta_\ell^+)
    \Leftrightarrow
    \zhat^+_{\ell,i} = g_\ell^+(r^+_{\lm1,i},r^-_{\ell,i},\theta_\ell^+),
\eeq
for $i=1,\ldots,N_\ell$.
for some scalar-valued function $g_\ell^+(\cdot)$. Thus, each element
$\zhat^+_{\ell,i}$ of the output vector $\zbfhat^+_\ell$ depends only the corresponding elements of the inputs
$r^+_{\lm1,i}$ and $r^-_{\ell,i}$.  We make a similar assumption on the first estimation function
$\gbf^+_0(\cdot)$ as well as the reverse functions $\gbf^-_\ell(\cdot)$ for $\ell=2,4,\ldots,L$ and define $g_0^+(\cdot)$ and $g_\ell^-(\cdot)$ in a similar manner.

Note that for the linear layers $\ell=1,3,\ldots,\Lm1$, we assume the MAP denoiser \eqref{eq:gmap}.
For the linear layer, this is identical to the MMSE denoiser and the estimation functions can be written as,
\begin{subequations}\label{eq:glintrans}
\begin{align}
  \MoveEqLeft \gbf^+_\ell(\rbf^+_{\lm1},\rbf^-_{\lm1},\theta^+_{\ell}) \nonumber \\
  &=
  \Vbf_\ell{\Gbf}^+_\ell(\Vbf_{\lm1}\rbf^+_{\lm1},\Vbf_{\ell}\tran\rbf^-_{\lm1},
  \bar{\sbf}_\ell,\bar{\bbf}_\ell,\theta^-_{\ell}) \\
  \MoveEqLeft \gbf^-_\ell(\rbf^+_{\lm1},\rbf^-_{\lm1},\theta^+_{\ell}) \nonumber \\
   &:=
  \Vbf_{\lm1}\tran{\Gbf}^+_\ell(\Vbf_{\lm1}\rbf^+_{\lm1},\Vbf_{\ell}\tran\rbf^-_{\lm1},
  \bar{\sbf}_\ell,\bar{\bbf}_\ell,   \theta^-_{\ell}),
\end{align}
\end{subequations}
where, for each parameter value $\theta^{\pm}_\ell$, the functions ${\Gbf}^{\pm}_\ell(\cdot)$ are Lipschitz continuous in
$(\rbf^+_{\lm1},\rbf^-_\ell,\bar{\sbf}_\ell)$ and are componentwise extensions of ${G}_\ell^{\pm}$ defined as,
\begin{align}
    \MoveEqLeft \begin{bmatrix}
        {G}_\ell^-(\bar{u}_{\lm1},\bar{u}_\ell,s_\ell,\bar{b}_\ell,\gamma_{\lm1}^+,\gamma_{\ell}^-) \\
        {G}_\ell^+(\bar{u}_{\lm1},\bar{u}_\ell,s_\ell,\bar{b}_\ell,\gamma_{\lm1}^+,\gamma_{\ell}^-)
    \end{bmatrix}
    \nonumber \\
       & =  \begin{bmatrix}
        \gamma_{\lm1}^+ + \nu_\ell s_\ell^2 & -\nu_\ell s_\ell \\
        -\nu_\ell s_\ell & \gamma_\ell^- +\nu_\ell
        \end{bmatrix}^{-1}
        \begin{bmatrix}
        \gamma_{\lm1}^+ \bar{u}_{\lm1} - \nu_\ell s_\ell \bar{b}_\ell \\
        \gamma_{\ell}^- \bar{u}_{\ell} + \nu_\ell \bar{b}_\ell
        \end{bmatrix}, \label{eq:gdeflin}
\end{align}
We call the functions ${\Gbf}^{\pm}_\ell(\cdot)$, the \emph{transformed denoising
functions}. We refer the reader to the appendices
of \cite{fletcher2017inference} for a detailed derivation of $\Gbf^{\pm}_\ell$.
We now need two further technical assumptions.

\begin{algorithm}[t]
\caption{Transformed ML-VAMP Recursion}
\begin{algorithmic}[1]  \label{algo:trans}
\STATE{// Initialization }
\STATE{Initial vectors $\wbf_\ell$, $\qbf^0_0$, $\qbf_{0\ell}^-$}
    \label{line:init_gen}
\STATE{$\qbf^0_0 = \fbf^0_0(\wbf_0), \quad \pbf^0_0 = \Vbf_0\qbf^0_0$} \label{line:q00init_gen}
\FOR{$\ell=1,\ldots,\Lm1$}
    \STATE{$\qbf^0_\ell = \fbf^0_\ell(\pbf^0_{\lm1},\wbf_\ell, \Lambda_{01}^-)$ }
    \label{line:q0init_gen}
    \STATE{$\pbf^0_\ell = \Vbf_\ell\qbf^0_\ell$ }  \label{line:p0init_gen}
\ENDFOR
\STATE{}

\FOR{$k=0,1,\dots,N_{\rm it}-1$}

    \STATE{// Forward Pass }
    \STATE{$\qbfhat^+_{k0} = \hbf^+_0(\qbf^-_{k\ell},\wbf_\ell,\theta^+_{k0})$ }
    \STATE{$\alpha^+_{k0} = \bkt{\partial
        \hbf^+_0(\qbf^-_{k0},\wbf_\ell,\theta^+_{k0})/\partial \qbf^-_{k0}}$}
        \label{line:alphagenp0}
    \STATE{$\Lambda_{k0}^+ = (\alpha_{k0}^+,\theta_{k0}^+)$}
    \STATE{$\qbf^+_{k0} = \fbf^+_0(\qbf^-_{k0},\wbf_\ell,\Lambda^+_{k0})$ }
    \STATE{$\pbf^+_{k0} = \Vbf_0 \qbf^+_{k0}$}
    \FOR{$\ell=1,\ldots,\Lm1$}
        \STATE{$\qbfhat^+_{k\ell} = \hbf^+_\ell(\pbf^0_{\lm1},\pbf^+_{k,\lm1},
            \qbf^-_{k\ell},\wbf_\ell,\theta^+_{k\ell})$ }
        \STATE{$\alpha^+_{k\ell} = \bkt{\partial
            \hbf^+_\ell(\pbf^+_{k,\lm1},\qbf^-_{k\ell},\wbf_\ell,\theta^+_{k\ell})/\partial \qbf^-_{k\ell}}$}
            \label{line:alphagenp}
        \STATE{$\Lambda_{k\ell}^+ = (\alpha_{k\ell}^+,\theta_{k\ell}^+)$}
        \STATE{$\qbf^+_{k\ell} = \fbf^+_\ell(\pbf^+_{k,\lm1},\qbf^-_{k\ell},\wbf_\ell,
            \Lambda^+_{k\ell})$ }
        \STATE{$\pbf^+_{k\ell} = \Vbf_\ell \qbf^+_{k\ell}$}
    \ENDFOR
    \STATE{}

    \STATE{// Reverse Pass }
    \STATE{$\pbfhat^-_{\kp1,\Lm1} =
        \hbf^-_{L}(\pbf^+_{k,\Lm1},\wbf_{L},\theta^-_{kL})$ }
    \STATE{$\alpha^-_{k,\Lm1} = \bkt{\partial
        \hbf^-_{L}(\pbf^+_{k,\Lm1},\wbf_{L},
        \theta^-_{kL})/\partial \pbf^+_{k,\Lm1}}$}
        \label{line:alphagennL}
    \STATE{$\Lambda_{kL}^- = (\alpha_{k,\Lm1}^-,\theta_{kL}^-)$}
    \STATE{$\pbf^-_{\kp1,\Lm1} =
        \fbf^-_{L}(\pbf^+_{k,\Lm1},\wbf_{L},\Lambda^-_{kL})$ }
    \FOR{$\ell=L-2,\ldots,0$}
        \STATE{$\pbfhat^-_{\kp1,\lm1} = \hbf^-_{\lp1}(\pbf^+_{k\ell},\qbf^-_{k,\lp1},\wbf_{\lp1},\theta^-_{k,\lp1})$}
        \STATE{$\alpha^-_{k\ell} =
        \bkt{\partial \hbf^-_{\lp1}(\pbf^+_{k\ell},\cdots)/\partial \pbf^+_{k\ell}}$}
            \label{line:alphagenn}
        \STATE{$\Lambda_{k,\lp1}^- = (\alpha_{k,\lp1}^-,\theta_{k,\lp1}^-)$}
        \STATE{$\pbf^-_{\kp1,\ell} =
            \fbf^-_{\lp1}(\pbf^+_{k\ell},\qbf^-_{k,\lp1},\wbf_{\lp1},\Lambda^-_{k,\lp1})$ }
        \STATE{$\qbf^-_{\kp1,\ell} = \Vbf_\ell\tran \pbf^-_{\kp1,\ell}$}
    \ENDFOR

\ENDFOR
\end{algorithmic}
\end{algorithm}

\begin{algorithm}[t]
\caption{State Evolution for ML-VAMP}
\begin{algorithmic}[1]  \label{algo:gen_se}

\REQUIRE{Vector update component functions $f^0_\ell(\cdot)$ and $f^\pm_{k\ell}(\cdot)$}

\STATE{}
\STATE{// Initial pass}
\STATE{Initial random variables:  $W_\ell$, $Q_{0\ell}^-$, $\ell=0,\ldots,\Lm1$}
    \label{line:qinit_se_gen}
\STATE{$Q^0_0 = f^0_0(W_0)$} \label{line:q0init_se_gen}
\STATE{$P^0_0 \sim \Norm(0,\tau^0_0)$, $\tau^0_0 = \Exp(Q^0_0)^2$} \label{line:p0init_se_gen}
\FOR{$\ell=1,\ldots,\Lm1$}
    \STATE{$Q^0_\ell=f^0_\ell(P^0_{\lm1},W_\ell)$}
    \STATE{$P^0_\ell = \Norm(0,\tau^0_\ell)$,
            $\tau^0_\ell = \Exp(Q^0_\ell)^2$} \label{line:pinit_se_gen}
\ENDFOR
\STATE{}

\FOR{$k=0,1,\dots$}
    \STATE{// Forward Pass }
    \STATE{$\hat{Q}^+_{k0} = h^+_0(Q_{k0}^-,W_0,\theta^+_{k0}))$} \label{line:qhat0_se_gen}
    \STATE{$\alphabar_{k0}^+ = \Exp(\partial h^+_0(Q_{k0}^-,W_0,\theta^+_{k0})/\partial Q_{k0}^-)$}
    \STATE{$\Lambdabar_{k0}^+ = (\alphabar^+_{k0},\theta_{k0}^+)$}
    \STATE{$Q_{k0}^+ = f^+_{k0}(Q_{k0}^-,W_0,\Lambdabar^+_{k0})$}  \label{line:q0_se_gen}
    \STATE{$(P^0_0,P_{k0}^+) = \Norm(\zero,\Kbf_{k0}^+)$,
        $\Kbf_{k0}^+ = \Cov(Q^0_0,Q_{k0}^+)$} \label{line:p0_se_gen}
    \FOR{$\ell=1,\ldots,L-1$}
        \STATE{$\hat{Q}^+_{k\ell} = h^+_\ell(P^0_{\lm1},P^+_{k,\lm1},Q_{k\ell}^-,
            W_\ell,\theta^+_{k\ell}))$} \label{line:qhat_se_gen}
        \STATE{$\alphabar_{k0}^+ = \Exp(\partial h^+_\ell(\ldots))/\partial Q_{k\ell}^-)$}
        \STATE{$\Lambdabar_{k\ell}^+ = (\alphabar^+_{k\ell},\theta_{k\ell}^+)$}
            \label{line:lamp_se_gen}
        \STATE{$Q_{k\ell}^+ = f^+_{k\ell}(P^0_{\lm1},P^+_{k,\lm1},Q_{k\ell}^-,W_\ell,\Lambdabar^+_{k\ell})$}
            \label{line:qp_se_gen}
        \STATE{$(P^0_\ell,P_{k\ell}^+) = \Norm(\zero,\Kbf_{k\ell}^+)$,
            $\Kbf_{k\ell}^+ = \Cov(Q^0_\ell,Q_{k\ell}^+) $}   \label{line:pp_se_gen}
    \ENDFOR
    \STATE{}

    \STATE{// Reverse Pass }
    \STATE{$\hat{P}_{\kp1,\Lm1}^- = h^-_{kL}(P^0_{\Lm1},P_{k,\Lm1}^+,W_L,\theta^-_{\kp1,L})$}
        \label{line:phatL_se_gen}
    \STATE{$\alphabar_{k,\Lm1}^- = \partial h^-_{kL}(\cdots)/  \partial P_{k,\Lm1}^+$}
    \STATE{$\Lambdabar_{k,\Lm1}^- = (\alphabar^-_{k,\Lm1},\theta_{k,\Lm1}^-)$}
    \STATE{$P_{\kp1,\Lm1}^- = f^-_{kL}(P^0_{\Lm1},P_{k,\Lm1}^+,W_L,\Lambdabar^-_{\kp1,L})$}  \label{line:pL_se_gen}
    \STATE{$\tau_{\kp1,\Lm1}^- = \Exp(P^-_{\kp1,\Lm1})^2$} \label{line:tauL_se_gen}
    \STATE{$Q_{\kp1,\Lm1}^- = \Norm(0,\tau_{\kp1,\Lm1}^-)$} \label{line:qL_se_gen}
    \FOR{$\ell=\Lm1,\ldots,1$}
        \STATE{$\hat{P}_{\kp1,\lm1}^- = h^-_{k\ell}(P^0_{\lm1},P_{k,\lm1}^+,W_\ell,\theta^-_{\kp1,\ell})$}
            \label{line:phatn_se_gen}
        \STATE{$\alphabar_{k,\lm1}^- = \Exp(\partial h^-_{k\ell}(\ldots)/ \partial P_{k,\Lm1}^+)$}
        \STATE{$\Lambdabar_{k\ell}^- = (\alphabar^-_{k,\lm1},\theta_{k,\ell}^-)$}
       \STATE{$P_{\kp1,\lm1}^- =
        f^-_{k\ell}(P^0_{\lm1},P^+_{k,\lm1},Q_{\kp1,\ell}^-,W_\ell,\Lambdabar^-_{k\ell})$}
            \label{line:pn_se_gen}
        \STATE{$\tau_{\kp1,\lm1}^- = \Exp(P_{\kp1,\lm1}^-)^2$} \label{line:taun_se_gen}
        \STATE{$Q_{\kp1,\lm1}^- = \Norm(0,\tau_{\kp1,\lm1}^-)$}   \label{line:qn_se_gen}
    \ENDFOR

\ENDFOR
\end{algorithmic}
\end{algorithm}

\section{Proof of Theorem \ref{thm:mapse}}  \label{sec:mapsepf}

\subsection{Transformed MLP}

The SE analysis of MMSE ML-VAMP in \cite{fletcher2017inference} proves a result on
a general class of multi-layer recursions, called Gen-ML.  To prove Theorem~\ref{thm:mapse},
we will show that ML-VAMP algorithm in Algorithm~\ref{algo:ml-vamp} is of the form
of an almost identical recursion with some minor changes in notation.
Theorem~\ref{thm:mapse} in this paper
will then follow from applying the general result in
\cite{fletcher2017inference}.  Since most of the proof is identical, we highlight only the
main differences.

Similar to \cite{fletcher2017inference}, we rewrite the MLP in \eqref{eq:nntrue} in
a certain transformed form.  To this end, define the disturbance vectors,
\begin{subequations} \label{eq:wdef}
\begin{align}
    \wbf_0 &:= \zbfhat^0_0, \quad
    \wbf_{\ell} := \xibf_\ell, \quad \ell =2,4,\ldots,L \\
    \wbf_{\ell} &= (\bar{\sbf}_{\ell},\bar{\bbf}_\ell,\bar{\xibf}_\ell),  \quad \ell =1,3,\ldots,\Lm1.
\end{align}
\end{subequations}
Also, define the scalar-valued functions,
\begin{subequations} \label{eq:f0ml}
\begin{align}
    \MoveEqLeft f^0_0(w_0) := w_0, \\
    \MoveEqLeft f^0_\ell(p^0_{\lm1},w_\ell) = f^0_\ell(p^0_{\lm1},\xi_\ell) := \phi_\ell(p^0_{\lm1},\xi_\ell), \nonumber \\
    & \ell=2,4,\ldots,L    \label{eq:f0nonlin} \\
    \MoveEqLeft f^0_\ell(p^0_{\lm1},w_\ell) = f^0_\ell(p^0_{\lm1},(\bar{s}_\ell,\bar{b}_\ell,\bar{\xi}_\ell))
        = \bar{s}_\ell p^0_\ell + \bar{b}_\ell + \bar{\xi}_\ell, \nonumber \\
    & \ell=1,3,\ldots,\Lm1.    \label{eq:f0lin}
\end{align}
\end{subequations}
Let $\fbf^0_\ell(\cdot)$ be their \emph{componentwise extension}, meaning that
\[
    \left[ \fbf^0_\ell(\pbf^0_{\lm1,n},\wbf_{\ell,n})\right]_n
    := f^0_\ell(p^0_{\lm1,n},w_{\ell,n}),
\]
so that $\fbf^0_\ell(\cdot)$ acts with the scalar-valued function $f^0_\ell(\cdot)$ on
each component of the vectors.
With this definition, it is shown in \cite{fletcher2017inference} that
the vectors $\pbf^0_\ell$ and $\qbf^0_\ell$ satisfy the recursions in
``Initialization" section of Algorithm~\ref{algo:trans}, the transformed algorithm.
This system of equations is represented diagrammatically in
the top panel of Fig.~\ref{fig:transformed}.
In comparison to Fig.~\ref{fig:nn_ml_vamp}, the transforms $\Wbf_\ell$ of the linear
 layers have been expanded using the SVD $\Wbf_{\ell} = \Vbf_{\ell}\Sigmabf_\ell\Vbf_{\lm1}$
 and inserting intermediate variables $\qbf^0_\ell$ and $\pbf^0_\ell$.
With the transformation, the  MLP \eqref{eq:nntrue} is equivalent to
a sequence of linear transforms by orthogonal matrices $\Vbf_\ell$ and non-linear componentwise
mappings $f^0_\ell(\cdot)$.

\begin{figure*}
\begin{tikzpicture}

    \tikzstyle{var}=[draw,circle,fill=green!20,node distance=2.5cm]
    \tikzstyle{linblock}=[draw,fill=blue!20, minimum size=1cm, node distance=2cm];
    \tikzstyle{nlblock}=[draw,fill=green!20, minimum size=1cm, node distance=2cm];
    \tikzstyle{vblock}=[draw,fill=purple!10, minimum size=1cm, node distance=2cm];

    \node [nlblock] (z0) {$\fbf^0_0$};
    \node [vblock, right of=z0] (V0) {$\Vbf_0$};
    \node [linblock, right of=V0] (S1) {$\fbf^0_1$};
    \node [vblock, right of=S1] (V1) {$\Vbf_1$};
    \node [nlblock, right of=V1] (phi2) {$\fbf^0_2$};
    \node [vblock, right of=phi2] (V2) {$\Vbf_2$};
    \node [linblock, right of=V2] (S3) {$\fbf^0_3$};
    \node [vblock, right of=S3] (V3) {$\Vbf_3$};
    \node [var, right of=V3] (phi4) {};

    \node [nlblock,below of=z0] (h0) {$\fbf_0^+$};
    \node [vblock, below of=V0] (V0b) {$\Vbf_0$};
    \node [linblock, below of=S1] (h1) {$\fbf_1^{\pm}$};
    \node [vblock, below of=V1] (V1b) {$\Vbf_1$};
    \node [nlblock, below of=phi2] (h2) {$\fbf_2^{\pm}$};
    \node [vblock, below of=V2] (V2b) {$\Vbf_2$};
    \node [linblock, below of=S3] (h3) {$\fbf_3^{\pm}$};
    \node [vblock, below of=V3] (V3b) {$\Vbf_3$};
    \node [nlblock, below of=phi4] (h4) {$\fbf_4^-$};

    \path[->] (z0) edge  node [above] {{\begin{tabular}{l} $\zbf^0_0=$ \\ $\qbf^0_0$ \end{tabular}}} (V0);
    \path[->] (V0) edge  node [above] {$\pbf^0_0$} (S1);
    \path[->] (S1) edge  node [above] {$\qbf_1^0$} (V1);
    \path[->] (V1) edge  node [above] {{\begin{tabular}{l} $\zbf^0_1=$ \\ $\pbf^0_1$ \end{tabular}}} (phi2);
    \path[->] (phi2) edge  node [above] {{\begin{tabular}{l} $\zbf^0_2=$ \\ $\qbf^0_2$ \end{tabular}}} (V2);
    \path[->] (V2) edge  node [above] {$\pbf^0_2$} (S3);
    \path[->] (S3) edge  node [above] {$\qbf^0_3$} (V3);
    \path[->] (V3) edge  node [above] {{\begin{tabular}{l} $\zbf^0_3=$ \\ $\pbf^0_3$ \end{tabular}}} (phi4);

    \path[<->] (h0) edge  node [above]
        {{\begin{tabular}{c} $\rbf^{\pm}_{k0}-\zbf^0_0$ \\ $\qbf^{\pm}_{k0}$ \end{tabular}}} (V0b);
    \path[<->] (V0b) edge  node [above] {$\pbf^{\pm}_{k0}$} (h1);
    \path[<->] (h1) edge  node [above] {$\qbf^{\pm}_{k1}$} (V1b);
    \path[<->] (V1b) edge  node [above]
        {{\begin{tabular}{c} $\rbf^{\pm}_{k1}-\zbf^0_1$ \\ $\pbf^{\pm}_{k1}$ \end{tabular}}}  (h2);
    \path[<->] (h2) edge  node [above]
        {{\begin{tabular}{c} $\rbf^{\pm}_{k2}-\zbf^0_2$ \\ $\qbf^{\pm}_{k2}$ \end{tabular}}} (V2b);
    \path[<->] (V2b) edge  node [above] {$\pbf^{\pm}_{k2}$} (h3);
    \path[<->] (h3) edge  node [above] {$\qbf^{\pm}_{k3}$} (V3b);
    \path[<->] (V3b) edge  node [above]
        {{\begin{tabular}{c} $\rbf^{\pm}_{k3}-\zbf^0_3$ \\ $\pbf^{\pm}_{k3}$ \end{tabular}}} (h4);
\end{tikzpicture}
\caption{Transformed view of the MLP and message passing system in Fig.~\ref{fig:nn_ml_vamp}.
The linear transforms $\Wbf_\ell$ are replaced by the SVD $\Wbf_\ell = \Vbf_\ell\Sigmabf_\ell\Vbf_{\lm1}$,
and intermediate layers are added for each component of the SVD.  With this transformation,
the MLP and message passing algorithm are reduced to alternating multiplications by $\Vbf_\ell$ and $\Vbf_\ell^*$
and componentwise (possibly nonlinear) functions.
 \label{fig:transformed}}
\end{figure*}

\subsection{Parameters}
To handle parameterized functions, the analysis in \cite{fletcher2017inference} introduces
the concept of \emph{parameter lists}.  For our purpose, let
\beq \label{eq:lamlist}
    \Lambda^+_{k\ell} := (\alpha_{k\ell}^+,\theta_{k\ell}^+), \quad
    \Lambda^-_{k\ell} := (\alpha_{k\ell}^-,\theta_{k\ell}^-),
\eeq
which is simply the parameter $\alpha_{k\ell}^{\pm}$ along with the parameter $\theta_{k\ell}^{\pm}$
for the estimators.

\subsection{Estimation Functions}
Similar to the transformed system in the top panel of Fig.~\ref{fig:transformed},
we next represent the steps in the
ML-VAMP Algorithm~\ref{algo:ml-vamp} as a sequence of alternating linear and nonlinear maps.
Let $\ell=0,2,4,\ldots,L$ be the index of  a nonlinear layer and
define the scalar-valued functions,
\begin{subequations} \label{eq:hnonlin}
\begin{align}
    \MoveEqLeft h^{+}_{0}(q_{0}^-,w_0,\theta_{k0}^+)
        := g^{+}_0(q^-_0 + w_0,\theta_{k0}^+), \label{eq:gtp0nonlin} \\
    \MoveEqLeft h^{+}_{\ell}(p^0_{\lm1},p_{\lm1}^+,q_{\ell}^-,w_\ell,\theta_{k\ell}^+) \nonumber \\
        &:= g^{+}_\ell(p_{\lm1}^+ +p^0_{\lm1},q^-_\ell + q^0_\ell,\theta_{k\ell}^+),  \label{eq:hpnonlin} \\
    \MoveEqLeft h^{-}_{L}(p^0_{\Lm1},p_{\Lm1}^+,w_L,\theta_{k,L}^-)
        := g^{-}_L(p_{\Lm1}^+ + p^0_{\Lm1},\theta_{k,L}^-). \label{eq:gtnLnonlin} \\
    \MoveEqLeft h^{-}_{\ell}(p^0_{\lm1},p_{\lm1}^+,q_{\ell}^-,w_\ell,\theta_{k,\lp1}^+) \nonumber \\
        &:= g^{-}_\ell(p_{\lm1}^+ +p^0_{\lm1},q^-_\ell + q^0_\ell,\theta_{k\ell}^-), \label{eq:hnnonlin}
\end{align}
\end{subequations}
For $\ell=1,3,\ldots,\Lm1$, the index of a linear layer, and
$w_\ell=(\bar{s}_\ell,\bar{b}_\ell)$, let
\begin{subequations} \label{eq:hlin}
\begin{align}
    \MoveEqLeft
    h^{+}_{\ell}(p_{\lm1}^0,p_{\lm1}^+,q_{\ell}^-,w_\ell,\theta_{k\ell}^+) \nonumber \\
        &:=  G^{+}_\ell(p_{\lm1}^+ +p^0_{\lm1},q^-_\ell + q^0_\ell, \bar{s}_\ell, \bar{b}_\ell,
        \theta^+_{k\ell}), \label{eq:hplin} \\
    \MoveEqLeft
    h^{-}_{\ell}(p_{\lm1}^0,p_{\lm1}^+,q_{\ell}^-,w_\ell,\theta_{k\ell}^-) \nonumber \\
        &:= G^{-}_\ell(p_{\lm1}^+ +p^0_{\lm1},q^-_\ell + q^0_\ell, \bar{s}_\ell, \bar{b}_\ell,
            \theta^-_{k\ell}), \label{eq:hnlin}
\end{align}
\end{subequations}
where $G^{\pm}(\cdot)$ are the components of the transformed linear estimation functions.
For both the linear and nonlinear layers, we then define the update functions as,
\begin{subequations} \label{eq:fhdef}
\begin{align}
    \MoveEqLeft f^{+}_{0}(q_{0}^-,w_0,\Lambda_{k0}^+) := \frac{1}{1-\alpha_{k\ell}^+} \nonumber \\
     &\times \left[
            h^{+}_{0}(q_{0}^-,w_0,\theta_{k0}^+) -w_0
            - \alpha_{k0}^+ q_0^- \right], \label{eq:fh0p} \\
    \MoveEqLeft f^{+}_{\ell}(p^0_{\lm1},p_{\lm1}^+,q_{\ell}^-,w_\ell,\Lambda_{k\ell}^+)
    := \frac{1}{1-\alpha_{k\ell}^+} \nonumber \\
    & \times \left[
            h^{+}_{\ell}(p^0_{\lm1},p_{\lm1}^+,q_{\ell}^-,w_\ell,\theta_{k\ell}^+)
            - q^0_\ell - \alpha_{k\ell}^+ Q_\ell^- \right], \label{eq:fhp} \\
    \MoveEqLeft f^{-}_{L}(p^0_{\Lm1},p_{\Lm1}^+,w_L,\Lambda_{kL}^-)
    := \frac{1}{1-\alpha_{k\ell}^-}\nonumber \\
    &\times \left[
            h^{-}_{L}(p^0_{\Lm1},p_{\Lm1}^+,w_L,\theta_{kL}^-) - p^0_{\Lm1}
            - \alpha_{k,\Lm1}^- p_{\Lm1}^+ \right], \label{eq:fhLn}    \\
    \MoveEqLeft f^{-}_{\ell}(p^0_{\lm1},p_{\lm1}^+,q_{\ell}^-,w_\ell,\Lambda_{k\ell}^-)
    := \frac{1}{1-\alpha_{k,\lm1}^-} \nonumber \\
     &\times \left[
            h^{-}_{\ell}(p^0_{\lm1},p_{\lm1}^+,q_{\ell}^-,w_\ell,\theta_{k\ell}^-)
            - p^0_{\lm1}
            - \alpha_{k,\lm1}^- p_{\lm1}^+ \right]. \label{eq:fhn}
\end{align}
\end{subequations}
With these definitions, let $\fbf^{\pm}_\ell(\cdot)$ and $\hbf^{\pm}_\ell(\cdot)$
be the componentwise extensions of
$f^{\pm}_\ell(\cdot)$ and $h^{\pm}_\ell(\cdot)$.
It is then shown in \cite{fletcher2017inference} that the vectors
in \eqref{eq:pqdef} satisfy the recursions in the ``Forward" and ``Reverse" passes
of the transformed ML recursion in Algorithm~\ref{algo:trans}.

This is diagrammatically represented in the bottom panel of Fig.~\ref{fig:transformed}.
We see that, in the forward pass, the vectors are generated by an alternating sequence
of componentwise mappings where
\[
    \qbf^+_{k\ell} = \fbf^+_\ell(\pbf^+_{k,\lm1},\qbf^-_{k\ell},\cdots),
\]
followed by multiplication by $\Vbf_\ell$,
\[
    \pbf^+_{k\ell} = \Vbf_\ell \qbf^+_{k\ell}.
\]
Similarly, in the reverse pass, we have a componentwise mapping,
\[
    \pbf^-_{k\ell} = \fbf^-_{\lp1}(\pbf^+_{k\ell},\qbf^-_{k,\lp1},\cdots),
\]
followed by multiplication by $\Vbf_\ell\tran$,
\[
    \qbf^-_{k\ell} = \Vbf_\ell\tran \pbf^-_{k\ell}.
\]
Thus, similar to the MLP, we have written the forward and reverse passes of the
multi-layer updates as alternating sequence of componentwise (possibly nonlinear) functions
followed by multiplications by orthogonal matrices.

\subsection{SE Analysis}

Now that the variables and the ML-VAMP algorithm estiamtes are written in
the form of Algorithm~\eqref{algo:trans}, the
analysis of \cite{fletcher2017inference} to derive a simple state evolution.
Let
\begin{subequations} \label{eq:wdefvar}
\begin{align}
    W_0 &:= Z^0_0, \quad
    W_{\ell} := \Xi_\ell, \quad \ell =2,4,\ldots,L \\
    W_{\ell} &= (\bar{S}_{\ell},\bar{B}_\ell,\bar{\Xi}_\ell),  \quad \ell =1,3,\ldots,\Lm1,
\end{align}
\end{subequations}
where  $Z^0_0$, $\Xi_\ell$ and $(\bar{S}_{\ell},\bar{B}_\ell,\bar{\Xi}_\ell)$
are the random variable limits in \eqref{eq:varinitnl} and \eqref{eq:varinitlin}.
With these definitions, we can recursively define the random variables $Q_{k\ell}^{\pm}$
and $P_{k\ell}^{\pm}$ from the steps in Algorithm~\ref{algo:gen_se}.
This recursive definition of random variables is called the \emph{state evolution}.
We see that the SE updates in Algorithm~\ref{algo:gen_se} are in a one-to-one
correspondence with the steps in Transformed ML-VAMP algorithm, Algorithm~\ref{algo:trans}.
The key difference is that the SE updates involves \emph{scalar} random variables, as opposed 
to vectors.  The random variables are all either Gaussian random variables or the output of 
nonlinear function of the Gaussian random variables.  
In addition, the parameters of the Gaussians such as $\Kbf^+_{k\ell}$ and $\tau^-_{k\ell}$
are fully deterministic since they are computed via expectations.  

We now make further assumption:

\begin{assumption} \label{as:alpha}
Let $\alphabar_{k\ell}^{\pm}$ be generated by the SE recursions in Algorithm~\ref{algo:gen_se}.
Then  $\alphabar^{\pm}_{k\ell} \in (0,1)$ from the for all $k$ and $\ell$.
\end{assumption}

We can now state the main result.  The result includes Theorem~\ref{thm:mapse} as a special case.

\begin{theorem} \label{thm:genConv}  Let $\wbf_\ell, \pbf^{\pm}_{k\ell}$,
$\qbf^{\pm}_{k\ell}$, $\pbf^0_\ell$, $\qbf^0_\ell$
be defined as above.  Consider the sequence of random variables defined
by the SE updates in Algorithm~\ref{algo:gen_se} under the above assumptions.
Then,
\begin{enumerate}[(a)]
\item For any fixed $k$ and $\ell=1,\ldots,\Lm1$,
the parameter list $\Lambda_{k\ell}^+$ converges as
\beq \label{eq:Lamplim}
    \lim_{N \arr \infty} \Lambda_{k\ell}^+ = \Lambdabar_{k\ell}^+
\eeq
almost surely.
Also, the components of
$\wbf_\ell$, $\pbf^0_{\lm1}$, $\qbf^0_{\ell}$, $\pbf_{0,\lm1}^+,\ldots,\pbf_{k,\lm1}^+$ and $\qbf_{0\ell}^\pm,\ldots,\qbf_{k\ell}^\pm$
almost surely empirically converge jointly with limits,
\begin{align}
   \MoveEqLeft \lim_{N \arr \infty} \left\{
        (p^0_{\lm1,n},p^+_{i,\lm1,n},q^0_{\ell,n},q^-_{j\ell,n},q^+_{j\ell,n}) \right\} \nonumber \\
        &=
        (P^0_{\lm1},P^+_{i,\lm1},Q^0_{\ell},Q^-_{j\ell}, Q^+_{j\ell}), \label{eq:PQplim}
\end{align}
for all $i,j=0,\ldots,k$, where the variables
$P^0_{\lm1}$, $P_{i,\lm1}^+$ and $Q_{j\ell}^-$
are zero-mean jointly Gaussian random variables independent of $W_\ell$ with
\begin{align} \label{eq:PQpcorr}
\begin{split}
    &\Cov(P^0_{\lm1},P_{i,\lm1}^+) = \Kbf_{i,\lm1}^+, \quad \Exp(Q_{j\ell}^-)^2 = \tau_{j\ell}^-, \nonumber \\
    &\Exp(P_{i,\lm1}^+Q_{j\ell}^-)  = 0,  \quad \Exp(P^0_{\lm1}Q_{j\ell}^-)  = 0,
\end{split}
\end{align}
The identical result holds for $\ell=0$ with the variables $\pbf_{i,\lm1}^+$ and $P_{i,\lm1}^+$ removed.

\item For any fixed $k > 0$ and $\ell=1,\ldots,\Lm1$,
the parameter lists $\Lambda_{k\ell}^-$ converge as
\beq \label{eq:Lamnlim}
    \lim_{N \arr \infty} \Lambda_{k\ell}^- = \Lambdabar_{k\ell}^-
\eeq
almost surely.
Also, the components of
$\wbf_\ell$, $\pbf^0_{\lm1}$, $\pbf_{0,\lm1}^+,\ldots,\pbf_{\km1,\lm1}^+$, $\pbf_{0,\lm1}^+,\ldots,\pbf_{\km1,\lm1}^+$, and $\qbf_{0\ell}^-,\ldots,\qbf_{k\ell}^-$
almost surely empirically converge jointly with limits,
\begin{align}
    \MoveEqLeft \lim_{N \arr \infty} \left\{
        (p^0_{\lm1,n},p^+_{i,\lm1,n},q^-_{j\ell,n},q^+_{j\ell,n}) \right\} \nonumber \\
        &= (P^0_{\lm1},P^+_{i,\lm1},Q^-_{j\ell}, Q^+_{j\ell}), \label{eq:PQnlim}
\end{align}
for all $i=0,\ldots,\km1$ and $j=0,\ldots,k$, where the variables
$P^0_{\lm1}$, $P_{i,\lm1}^+$ and $Q_{j\ell}^-$
are zero-mean jointly Gaussian random variables independent of $W_\ell$ with
\begin{align} \label{eq:PQncorr}
\begin{split}
    & \Cov(P^0_{\lm1},P_{i,\lm1}^+) = \Kbf_{i,\lm1}^+,\quad
    \quad \Exp(Q_{j\ell}^-)^2 = \tau_{j\ell}^-, \\
    & \Exp(P_{i,\lm1}^+Q_{j\ell}^-)  = 0,
    \quad \Exp(P^0_{\lm1}Q_{j\ell}^-)  = 0,
\end{split}
\end{align}
The identical result holds for $\ell=L$ with all the variables $\qbf_{j\ell}^-$ and $Q_{j\ell}^-$ removed.
Also, for $k=0$, we remove the variables with $\pbf_{\km1,\ell}^+$ and $P_{\km1,\ell}^+$.
\end{enumerate}
\end{theorem}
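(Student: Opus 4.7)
The plan is to prove Theorem~\ref{thm:genConv} by a joint induction on the iteration index $k$ and the layer index $\ell$, following the Bolthausen-style conditioning technique adapted to orthogonally-invariant random matrices. The crucial structural observation, already exposed by the transformations in Algorithms~\ref{algo:trans} and \ref{algo:gen_se} and depicted in Fig.~\ref{fig:transformed}, is that each pass of the algorithm consists of an alternating sequence of (i) componentwise pseudo-Lipschitz maps $\fbf_\ell^{\pm}$, $\hbf_\ell^{\pm}$ constructed in \eqref{eq:hnonlin}--\eqref{eq:fhdef}, and (ii) multiplications by orthogonal matrices $\Vbf_\ell$ or $\Vbf_\ell\tran$. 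This is exactly the abstract form of recursion treated in \cite{fletcher2017inference}, so the bulk of the work lies in verifying that the MAP-specific estimation functions in \eqref{eq:hlin}--\eqref{eq:fhdef} meet the hypotheses (componentwise action, pseudo-Lipschitz continuity uniformly in parameters) required by that abstract analysis, after which the induction can be executed in essentially the same way.

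For the base case, I would check that the initialization block of Algorithm~\ref{algo:trans} produces $(\pbf^0_{\lm1},\qbf^0_\ell)$ that empirically converge jointly to $(P^0_{\lm1},Q^0_\ell)$ from Algorithm~\ref{algo:gen_se}. This uses the assumed empirical convergence \eqref{eq:varinitnl}--\eqref{eq:varinitlin} of the disturbances $\wbf_\ell$, Haar-invariance of $\Vbf_\ell$ coming from the SVD \eqref{eq:WSVD}, and the pseudo-Lipschitz property of $f^0_\ell$. For the inductive step, two kinds of updates must be handled. For componentwise updates such as $\qbf^+_{k\ell}=\fbf^+_\ell(\pbf^+_{k,\lm1},\qbf^-_{k\ell},\wbf_\ell,\Lambda^+_{k\ell})$, empirical PL(2) convergence of the inputs combined with Lipschitz continuity of $f^+_\ell$ and almost-sure convergence $\Lambda^+_{k\ell}\to\Lambdabar^+_{k\ell}$ transfers through to the output by a continuous-mapping argument for PL(2) sequences.

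The delicate step is the orthogonal transform $\pbf^+_{k\ell}=\Vbf_\ell\qbf^+_{k\ell}$ (and its reverse analogue). Let $\mathcal{F}_{k\ell}$ be the $\sigma$-algebra generated by all iterates computed before this step. Conditioning on $\mathcal{F}_{k\ell}$ fixes all previously observed input-output pairs of $\Vbf_\ell$, so by rotational invariance the conditional law of $\Vbf_\ell$ is Haar on the orthogonal complement of the corresponding subspace. Decomposing $\qbf^+_{k\ell}$ into its projection onto this subspace plus its orthogonal complement, $\Vbf_\ell\qbf^+_{k\ell}$ splits into a deterministic linear combination of earlier $\pbf^+_{i,\ell}$-type vectors plus a conditionally Haar-rotated fresh vector whose components empirically behave as i.i.d.\ Gaussians with the variance the SE assigns. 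The $1/(1-\alpha^{\pm}_{k\ell})$ normalizations built into \eqref{eq:fh0p}--\eqref{eq:fhn} play the role of Onsager corrections: they cancel exactly the linear-combination contribution of the most recent past iterate, producing the orthogonality conditions $\Exp(P^+_{i,\lm1}Q^-_{j\ell})=0$ and $\Exp(P^0_{\lm1}Q^-_{j\ell})=0$ asserted in \eqref{eq:PQpcorr}--\eqref{eq:PQncorr}. Almost-sure convergence of the parameter lists $\Lambda^{\pm}_{k\ell}\to\Lambdabar^{\pm}_{k\ell}$ follows in parallel: each $\alpha^{\pm}_{k\ell}$ is an empirical average of a derivative of a pseudo-Lipschitz map of empirically convergent inputs, and Assumption~\ref{as:alpha} keeps $1-\alphabar^{\pm}_{k\ell}$ bounded away from zero so that the normalizations in \eqref{eq:fhdef} remain well-behaved in the limit.

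The main obstacle is the full joint (not merely marginal) bookkeeping of conditional covariances across all past iterates $i,j\le k$ and across both passes. One must propagate an inductive hypothesis strong enough to capture joint PL(2) convergence of the entire history $\{(\pbf^0_{\lm1},\pbf^+_{i,\lm1},\qbf^0_\ell,\qbf^-_{j\ell},\qbf^+_{j\ell})\}_{i,j\le k}$ and to verify that, after applying $\Vbf_\ell$ or $\Vbf_\ell\tran$, the incremental Gaussian contribution combines with the history-determined piece so that the covariances $\Kbf^{\pm}_{i\ell}$ and variances $\tau^{\pm}_{j\ell}$ produced by Algorithm~\ref{algo:gen_se} are matched exactly. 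This is the technical heart of the Gen-ML analysis in \cite{fletcher2017inference}, and in the present setting it extends without modification once the MAP estimation functions have been cast in the componentwise form \eqref{eq:hnonlin}--\eqref{eq:fhdef}; Theorem~\ref{thm:mapse} is then recovered by specializing to the quantities indexed in \eqref{eq:PQplim1}.
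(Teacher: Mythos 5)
Your proposal takes essentially the same route as the paper: the paper's entire proof consists of casting ML-VAMP into the transformed componentwise-plus-orthogonal form of Algorithm~\ref{algo:trans} via the definitions \eqref{eq:hnonlin}--\eqref{eq:fhdef} and then invoking the general state-evolution result of \cite{fletcher2017inference}, whose internal mechanics (induction over iterations and layers, conditioning on past iterates, Haar invariance of the $\Vbf_\ell$, and the $1/(1-\alpha^{\pm}_{k\ell})$ normalizations yielding the asymptotic orthogonality in \eqref{eq:PQpcorr}--\eqref{eq:PQncorr}) are exactly what you sketch. Your write-up is a correct and somewhat more explicit account of the argument the paper delegates to the citation.
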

\begin{proof}
This is proven almost identically to the result in \cite{fletcher2017inference}.
\end{proof}

\section{Numerical Experiments Details} \label{sec:simdetails}

\paragraph*{Synthetic random network}  The simulation is identical to \cite{fletcher2017inference},
except that we have run MAP ML-VAMP intead of MMSE ML-VAMP.  The details of the simulation are as follows:
As described in Section~\ref{sec:sim},
the network input is a $N_0=20$ dimensional Gaussian unit noise vector $\zbf_0$.
and has three hidden layers with 100 and 500 units and a variable number $N_y$ of
output units.  For the weight matrices and bias vectors
in all but the final layer,
we took $\Wbf_\ell$ and $\bbf_\ell$ to be random i.i.d.\ Gaussians.  The mean of the bias vector was selected
so that only a fixed fraction, $\rho=0.4$, of the linear outputs would be positive.  The activation
functions were rectified linear units (ReLUs), $\phi_\ell(z)=\max\{0,x\}$.  Hence, after activation,
there would be only a fraction $\rho=0.4$ of the units would be non-zero.  In the final layer,
we constructed the matrix similar to \citep{RanSchFle:14-ISIT} where $\Abf=\Ubf\Diag(\sbf)\Vbf\tran$,
with $\Ubf$ and $\Vbf$ be random orthogonal matrices and $\sbf$ be logarithmically spaced valued
to obtain a desired condition number of $\kappa=10$.  It is known from \citep{RanSchFle:14-ISIT}
that matrices with high condition numbers are precisely the matrices in which AMP algorithms fail.
For the linear measurements,
$\ybf = \Abf\zbf_5+\wbf$, the noise level $10\log_{10}( \Exp\|\wbf\|^2/\|\Abf\zbf_5\|^2)$ is set at 30 dB.
In Fig.~\ref{fig:randmlp_sim}, we have plotted the normalized MSE (in dB) which we define as
\[
    \mathrm{NMSE} := 10\log_{10}\left[ \frac{\|\zbf^0_0-\hat{\zbf}^{\pm}_{k0}\|^2}{\|\zbf^0_0\|^2} \right].
\]
Since each iteration of ML-VAMP involves a forward and reverse pass, we say that each iteration
consists of two ``half-iterations", using the same terminology as turbo codes.  The left panel
of Fig.~\ref{fig:randmlp_sim} plots the NMSE vs. half iterations.

\paragraph*{MNIST inpainting}  The well-known MNIST dataset consists of handwritten images
of size $28 \x 28 = 784$ pixels. We followed the procedure in \citep{kingma2013auto} for training
a generative model from 50,000 digits.
Each image $\xbf$ is modeled as the output of a neural network
input dimension of 20 variables followed by
a single hidden layer with 400 units and an output layer of 784 units,
corresponding to the dimension of the digits.
ReLUs were used for activation functions and a sigmoid was placed at the output
to bound the final pixel values between 0 and 1.
The inputs $\zbf^0_0$ were the modeled as zero mean Gaussians with unit variance.
The data was trained using the Adam optimizer
with the default parameters in TensorFlow
\footnote{Code for the training was based on
\url{https://github.com/y0ast/VAE-TensorFlow} by Joost van Amersfoort.}
The training optimization was run
with 20,000 steps with a batch size of 100 corresponding to 40 epochs.

The ML-VAMP algorithm was compared against MAP estimation.
As studied in \cite{yeh2016semantic,bora2017compressed}, MAP 
estimation can be performed via numerical minimization of the likelihood.
In this study, 
We used TensorFlow for the minimization.  We found the fastest
convergence with the Adam optimizer at a step-size of 0.01.  This required
only 500 iterations to be within 1\% of the final loss function.
For MAP ML-VAMP, the sigmoid
function does not have an analytic denoiser, so it was approximated with a probit output.
We found that the basic MAP ML-VAMP algorithm could be unstable.  Hence, damping 
as described in \citep{RanSchFle:14-ISIT} and \cite{rangan2017vamp} was used.
With damping, we needed to run the ML-VAMP algorithm for up to 500 iterations, which 
is comparable to the Adam optimizer.


}

\iftoggle{conference}{
\section*{Acknowledgements}
AKF was supported by NSF grants 1254204 and 1738286, and ONR N00014-15-1-2677. SR was supported by NSF Grants 1116589, 1302336, and 1547332, NIST award 70NANB17H166,
SRC, and NYU WIRELESS.
}
{}

\bibliographystyle{IEEEtran}
\bibliography{bibl_vamp}

\end{document}